\newtheorem{theorem}{Theorem}
\newtheorem{lemma}{Lemma}
\theoremstyle{remark}
\newtheorem{remark}{Remark}
\theoremstyle{definition}
\newtheorem{assumption}{Assumption}
\newtheorem{example}{Example}
\newcommand{\neutralize}[1]{\expandafter\let\csname c@#1\endcsname\count@}
\newenvironment{assumptionp}[1]
  {%
   \neutralize{assumption}\phantomsection
   \begin{assumption}}
  {\end{assumption}}
\def\E{\mathbb{E}}
\def\R{\mathbb{R}}
\def\Dsc{\mathcal{D}}
\def\Hsc{\mathcal{H}}
\def\Msc{\mathcal{M}}
\def\Nsc{\mathcal{N}}
\def\Usc{\mathcal{U}}
\def\mhat{\widehat{m}}
\def\Deltahat{\widehat{\Delta}}
\def\etahat{\widehat{\eta}}
\def\sigmahat{\widehat{\sigma}}
\def\xihat{\widehat{\xi}}
\def\varepsilonhat{\widehat{\varepsilon}}
\def\convd{\overset{d}{\longrightarrow}}
\def\convp{\overset{p}{\longrightarrow}}
\DeclareMathOperator{\var}{var}
\DeclareMathOperator{\cov}{cov}
\DeclareMathOperator{\tr}{tr}
\providecommand{\keywords}[1]{{\small\textit{Keywords:} #1}}
\begin{document}

\title{Machine-Learning-Assisted Comparison of Regression Functions}
\author[1]{Jian Yan}
\author[2]{Zhuoxi Li}
\author[1]{Yang Ning}
\author[3]{Yong Chen}
\date{}
\affil[1]{Department of Statistics and Data Science, Cornell University}
\affil[2]{Department of ISOM, Hong Kong University of Science and Technology}
\affil[3]{Department of Biostatistics, Epidemiology and Informatics, University of Pennsylvania}

\maketitle

\begin{abstract}
We revisit the classical problem of comparing regression functions, a fundamental question in statistical inference with broad relevance to modern applications such as data integration, transfer learning, and causal inference. Existing approaches typically rely on smoothing techniques and are thus hindered by the curse of dimensionality. We propose a generalized notion of kernel-based conditional mean dependence that provides a new characterization of the null hypothesis of equal regression functions. Building on this reformulation, we develop two novel tests that leverage modern machine learning methods for flexible estimation. We establish the asymptotic properties of the test statistics, which hold under both fixed- and high-dimensional regimes. Unlike existing methods that often require restrictive distributional assumptions, our framework only imposes mild moment conditions. The efficacy of the proposed tests is demonstrated through extensive numerical studies. 
\end{abstract}
\keywords{Comparison of regression functions; High dimensionality; Kernel-based conditional mean dependence; Machine learning.}

\section{Introduction}
The comparison of two regression functions is a fundamental problem in regression analysis \citep{munk1998nonparametric}. Consider two independent data sets $\Dsc^{(l)}=\{(Y_{i}^{(l)},X_{i}^{(l)})\}_{i=1}^{N_{l}}$ for $l=1,2$, where $Y_{i}^{(l)}\in\R$ is the response and $X_{i}^{(l)}\in\R^{p}$ is a set of covariates with dimension $p$ allowed to diverge. For $l=1,2$, we assume that $\{(Y_{i}^{(l)},X_{i}^{(l)})\}_{i=1}^{N_{l}}$ are independent and identically distributed samples of $(Y^{(l)},X^{(l)})\sim P^{(l)}\equiv P_{Y\mid X}^{(l)}\otimes P_{X}^{(l)}$, and define the regression function $m^{(l)}(x)=\E(Y^{(l)}\mid X^{(l)}=x)$ and the error $\varepsilon^{(l)}=Y^{(l)}-m^{(l)}(X^{(l)})$. We aim to test
\begin{equation}\label{eq:null}
    H_{0}:P^{(1)}_{X}\left\{m^{(1)}(X)=m^{(2)}(X)\right\}=1\quad\text{versus}\quad H_{a}:P^{(1)}_{X}\left\{m^{(1)}(X)\neq m^{(2)}(X)\right\}>0. 
\end{equation}
To ensure that the testing problem is nontrivial, we assume that $P_{X}^{(1)}\ll P_{X}^{(2)}$ and $P_{X}^{(2)}\ll P_{X}^{(1)}$, where the symbol $\ll$ stands for absolute continuity. Thus, the hypotheses (\ref{eq:null}) can be equivalently stated by replacing $P^{(1)}_{X}$ with $P^{(2)}_{X}$. 

The problem (\ref{eq:null}) also naturally arises in a variety of related areas. For instance, in data integration, it is necessary to assess whether two data sets share a common regression function before proceeding with modeling, estimation, and inference. When the null hypothesis $H_{0}$ holds, pooling information across sources can enhance statistical efficiency and lead to more reliable conclusions. In contrast, when heterogeneity is present, naive aggregation of data sets may obscure meaningful differences and result in misleading scientific findings. A similar issue appears in transfer learning, where it is commonly assumed that the conditional mean is identical in the source and target populations \citep{wang2025phase}—a weaker version of the standard covariate shift assumption. Testing hypotheses (\ref{eq:null}) is thus crucial for validating methods built upon this assumption. Another example comes from causal inference. In the binary treatment setting, testing the null hypothesis of a zero conditional average treatment effect \citep{crump2008nonparametric} can be formulated as problem (\ref{eq:null}) under the standard assumptions of consistency and no unmeasured confounding. In this context, the two independent samples are naturally defined by the treatment indicator. 

Much effort has been devoted to the problem (\ref{eq:null}) in the literature; see Section 7 of \citet{gonzalez2013updated} for a comprehensive review. Classical methods employ parametric models for the regression functions and assess equality through comparisons of model parameters. However, this approach necessitates correct model specification, which is often unrealistic in practice. The problem of testing the equality of two regression functions in the nonparametric setting was first considered by \citet{hall1990bootstrap,king1991testing}, and subsequently explored in a series of studies including \citet{delgado1993testing,kulasekera1995comparison,young1995non,dette2001nonparametric,lavergne2001equality,neumeyer2003nonparametric,pardo2007testing,srihera2010nonparametric,pardo2015non}. The existing literature primarily focuses on the univariate case, i.e., $p=1$. Most available methods rely on smoothing-based estimators of the regression functions, and thus suffer from the curse of dimensionality even if they can be extended to multivariate covariates. Furthermore, restrictive distributional assumptions are often imposed to facilitate theoretical analysis. \citet{hall1990bootstrap,king1991testing,delgado1993testing} concentrate on identical design points. \citet{king1991testing,young1995non} assume Gaussian errors. \citet{kulasekera1995comparison} assumes independence between $\varepsilon^{(l)}$ and $X^{(l)}$, and \citet{dette2001nonparametric,neumeyer2003nonparametric,pardo2007testing,pardo2015non} posit a heteroscedastic model of the form $\varepsilon^{(l)}=\sigma^{(l)}(X^{(l)})e^{(l)}$, where $e^{(l)}$ is independent of $X^{(l)}$ and $\sigma^{(l)}(\cdot)$ denotes the variance function. As noted by \citet{racine2020smooth}, these independence assumptions may be unduly stringent and should themselves be subjected to empirical testing. Moreover, several studies \citep{dette2001nonparametric,neumeyer2003nonparametric,pardo2007testing,pardo2015non} assume that $X^{(l)}$ has compact support (e.g., $[0,1]$) with density bounded away from zero, thereby excluding frequently encountered Gaussian distributions. 

In contrast to traditional smoothing-based methods, modern machine learning techniques offer flexible and powerful tools for modeling regression functions while alleviating the curse of dimensionality. In this paper, we propose two tests that leverage machine learning methods to address problem (\ref{eq:null}). To this end, we introduce a generalized notion of kernel-based conditional mean dependence which yields a novel characterization of the null hypothesis. Our proposed tests possess several attractive features relative to existing approaches:
\begin{itemize}
    \item They remain valid irrespective of whether the covariate dimension $p$ is fixed or diverging. 

    \item They do not rely on any specific functional form of the regression functions and require only mild moment conditions on the covariates and errors.

    \item They allow for heterogeneous covariate and error distributions across the two samples. 

    \item Under $H_{0}$, the test statistics admit simple Gaussian limiting distributions, allowing straightforward implementation without resorting to resampling procedures. Additionally, users retain full flexibility in the choice of regression methods. 
\end{itemize}

\textit{Notation.} Denote by $z_{\alpha}$ the $\alpha$ quantile of the standard normal distribution, for $\alpha\in(0,1)$. For two sequences of real numbers $\{a_{n}\}_{n=1}^{\infty}$ and $\{b_{n}\}_{n=1}^{\infty}$, we write $a_{n}=O(b_{n})$ if there exists a finite $n_{0}\ge 1$ and a finite $C>0$ such that $|a_{n}|\le C|b_{n}|$ for all $n\ge n_{0}$; $a_{n}=o(b_{n})$ if $\lim_{n\rightarrow\infty}a_{n}/b_{n}=0$; $a_{n}=\Omega(b_{n})$ if $b_{n}=O(a_{n})$; $a_{n}=\omega(b_{n})$ if $b_{n}=o(a_{n})$; $a_{n}\asymp b_{n}$ if $a_{n}=O(b_{n})$ and $b_{n}=O(a_{n})$. The symbols $\overset{p}{\rightarrow}$ and $\overset{d}{\rightarrow}$ stand for convergence in probability and in distribution, respectively. Denote by $\|\cdot\|$ the Euclidean norm in $\R^{p}$. For any probability measure $P$ on $\R^{p}$, let $L^{2}(P)$ denote the Hilbert space of all square integrable functions on $\R^{p}$. For $f\in L^{2}(P)$, denote by $\|f\|_{L^{2}(P)}=(\int f^{2}dP)^{1/2}$ the $L^{2}(P)$ norm of $f$.

\section{Preliminaries}
This section reviews reproducing kernel Hilbert spaces and introduces a generalized notion of the kernel-based conditional mean dependence. 

Let $\Hsc$ be a Hilbert space of real-valued functions defined on a topological space $\Usc$. A function $k:\Usc\times\Usc\rightarrow\R$ is called a reproducing kernel of $\Hsc$ if (i) $\forall u\in\Usc$, $k(\cdot,u)\in\Hsc$, and (ii) $\forall u\in\Usc,\forall f\in\Hsc$, $\langle f,k(\cdot,u)\rangle_{\Hsc}=f(u)$, where $\langle\cdot,\cdot\rangle_{\Hsc}$ is the inner product associated with $\Hsc$. If $\Hsc$ has a reproducing kernel, it is said to be a reproducing kernel Hilbert space (RKHS). According to the Moore-Aronszajn theorem, for every symmetric, positive definite function (henceforth kernel) $k:\Usc\times\Usc\rightarrow\R$, there is an associated RKHS $\Hsc_{k}$ with reproducing kernel $k$. For $\theta>0$, define $\Msc_{k}^{\theta}(\Usc)=\{\mu\in\Msc(\Usc):\int k^{\theta}(u,u)d|\mu|(u)<\infty\}$, where $\Msc(\Usc)$ denotes the set of all finite signed Borel measures on $\Usc$. The kernel mean embedding of $P\in\Msc_{k}^{1/2}(\Usc)$ into the RKHS $\Hsc_{k}$ is defined by the Bochner integral $\Pi_{k}(P)=\int k(\cdot,u)dP(u)\in\Hsc_{k}$. The kernel $k$ is said to be characteristic if the mapping $\Pi_{k}$ is injective. Conditions under which kernels are characteristic have been studied by \citet{sriperumbudur2008injective,sriperumbudur2010hilbert}. In particular, the Gaussian kernel and the Laplace kernel on $\R^{p}$ are characteristic. 

Let $V\in\R$ be an arbitrary random variable and $U\in\Usc$ be an arbitrary random element. For a given kernel $k:\Usc\times\Usc\rightarrow\R$, under the assumptions that $\E(V^{2})<\infty$ and $U\sim P\in\Msc_{k}^{1}(\Usc)$, the kernel-based conditional mean dependence (KCMD) of $V$ on $U$ is defined as \citep{lai2021kernel}
\[
\text{KCMD}(V\mid U)=\E[\{V-\E(V)\}\{V'-\E(V)\}k(U,U')],
\]
where $(V',U')$ denotes an independent copy of $(V,U)$. The martingale difference divergence proposed in \citet{shao2014martingale} is a special case of the KCMD associated with the kernel induced by the Euclidean distance \citep{lai2021kernel}. When the kernel $k$ is characteristic, we have $\text{KCMD}(V\mid U)\ge 0$, and $\text{KCMD}(V\mid U)=0$ if and only if $\E(V\mid U)=\E(V)$ almost surely. 

The definition of the KCMD can be generalized as follows. For a given $v_{0}\in\R$, define
\[
\text{KCMD}^{*}(V\mid U)=\E\{(V-v_{0})(V'-v_{0})k(U,U')\},
\]
which replaces $\E(V)$ with $v_{0}$. 
\begin{lemma}\label{lemma:kcmd}
    Assume that $\E(V^{2})<\infty$ and $U\sim P\in\Msc_{k}^{1}(\Usc)$. When the kernel $k$ is characteristic, for any $v_{0}\in\R$, we have $\textup{KCMD}^{*}(V\mid U)\ge0$, and $\textup{KCMD}^{*}(V\mid U)=0$ if and only if $\E(V\mid U)=v_{0}$ almost surely. 
\end{lemma}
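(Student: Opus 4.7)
The plan is to rewrite $\text{KCMD}^{*}(V\mid U)$ as a squared RKHS norm of an appropriate kernel mean embedding, and then invoke the characteristic property of $k$.

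First, let $g(u)=\E(V\mid U=u)$ and $h(u)=g(u)-v_{0}$, and define the finite signed Borel measure $\nu$ on $\Usc$ by $d\nu(u)=h(u)\,dP(u)$. I would first check that $\nu\in\Msc_{k}^{1/2}(\Usc)$ so that $\Pi_{k}(\nu)$ is a well-defined element of $\Hsc_{k}$: by Cauchy--Schwarz,
\[
\int k(u,u)^{1/2}\,d|\nu|(u)=\int k(u,u)^{1/2}|h(u)|\,dP(u)\le\Bigl(\int k(u,u)\,dP(u)\Bigr)^{1/2}\Bigl(\int h^{2}\,dP\Bigr)^{1/2},
\]
which is finite since $P\in\Msc_{k}^{1}(\Usc)$ and since $\E(V^{2})<\infty$ gives $\int g^{2}\,dP=\E\{g(U)^{2}\}\le\E(V^{2})<\infty$ by Jensen, hence $h\in L^{2}(P)$.

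Next, I would condition on $(U,U')$. Because $(V,U)$ and $(V',U')$ are independent copies,
\[
\text{KCMD}^{*}(V\mid U)=\E\{h(U)h(U')k(U,U')\}=\iint k(u,u')\,d\nu(u)\,d\nu(u').
\]
Using the Bochner integral definition $\Pi_{k}(\nu)=\int k(\cdot,u)\,d\nu(u)$, the reproducing property, and Fubini on the double integral, this expression equals $\langle\Pi_{k}(\nu),\Pi_{k}(\nu)\rangle_{\Hsc_{k}}=\|\Pi_{k}(\nu)\|_{\Hsc_{k}}^{2}\ge 0$, giving the first claim. For the ``iff,'' $\text{KCMD}^{*}(V\mid U)=0$ is equivalent to $\Pi_{k}(\nu)=0$ in $\Hsc_{k}$. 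Since $k$ is characteristic---i.e., $\Pi_{k}$ is injective on $\Msc_{k}^{1/2}(\Usc)$---this forces $\nu=0$, which means $h=0$ $P$-a.s., i.e., $\E(V\mid U)=v_{0}$ almost surely. The converse direction is immediate from the display above.

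The main thing to be careful about is that the definition of ``characteristic'' adopted in the preceding paragraph of the paper refers to injectivity of $\Pi_{k}$ on the full class of finite signed measures $\Msc_{k}^{1/2}(\Usc)$, not merely on probability measures. The signed measure $\nu$ constructed here has total mass $\E(V)-v_{0}$, which need not vanish, so the weaker ``injective on probability measures'' formulation would not suffice; once the stronger formulation is in hand, however, Step 3 becomes a one-line invocation. The only remaining technicalities are the routine justifications of Fubini (guaranteed by the integrability just established) and the interchange of inner products with Bochner integrals, both of which are standard for kernel mean embeddings.
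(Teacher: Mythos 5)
Your proof is correct and follows essentially the same route as the paper's: both reduce $\textup{KCMD}^{*}(V\mid U)$ to the squared RKHS norm of the kernel mean embedding of the same signed measure (your $\nu=h\,dP$ is exactly the paper's $\lambda\circ U^{-1}$) and then invoke injectivity of $\Pi_{k}$ on signed measures. Your version is slightly more careful in explicitly verifying $\nu\in\Msc_{k}^{1/2}(\Usc)$ and in flagging that the characteristic property must be understood as injectivity on signed measures rather than only on probability measures, both of which the paper leaves implicit.
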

This generalization of the KCMD, which is not considered in \citet{lai2021kernel}, serves as the foundation of our proposed test statistics. The proof of Lemma \ref{lemma:kcmd} is provided in the appendix.

\section{Test statistic and asymptotic properties}\label{sec:test}
We first make a new reformulation of the null hypothesis $H_{0}$. Define
\begin{align*}
    \eta^{(1)}&=Y^{(1)}-m^{(2)}(X^{(1)}),\\
    \eta^{(2)}&=Y^{(2)}-m^{(1)}(X^{(2)}),
\end{align*}
which become the true errors $\varepsilon^{(1)}$ and $\varepsilon^{(2)}$ under $H_{0}$. Then $H_{0}$ holds if and only if
\begin{equation}\label{eq:equiv}
    \E(\eta^{(l)}\mid X^{(l)})=0\quad(l=1,2)\quad\text{almost surely}.
\end{equation}
Based on this reformulation, for a given kernel $k$ on $\R^{p}$, define
\[
\Delta^{(l)}=\text{KCMD}^{*}(\eta^{(l)}\mid X^{(l)})=\E\left\{\eta^{(l)}\eta^{(l)\prime}k(X^{(l)},X^{(l)\prime})\right\}\quad(l=1,2),
\]
where $(\eta^{(l)\prime},X^{(l)\prime})$ denotes an independent copy of $(\eta^{(l)},X^{(l)})$. In this paper, we make the following assumption on our kernel. Examples include the Gaussian kernel and the Laplace kernel. 
\begin{assumption}\label{assu1}
    The kernel $k$ is characteristic. Besides, there exists a constant $K>0$ such that $\sup_{x,x'}|k(x,x')|\le K$. 
\end{assumption}

\begin{theorem}\label{thm:kcmd}
    Assume that $\E(|\eta^{(l)}|^{2})<\infty$ for $l=1,2$. Under Assumption \ref{assu1}, we have $\Delta^{(l)}\ge 0$, and $\Delta^{(l)}=0$ if and only if the null hypothesis $H_{0}$ in (\ref{eq:null}) holds. 
\end{theorem}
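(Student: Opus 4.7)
The plan is to reduce Theorem \ref{thm:kcmd} to a direct application of Lemma \ref{lemma:kcmd} with $V = \eta^{(l)}$, $U = X^{(l)}$, and $v_{0} = 0$, and then translate the resulting conditional-mean identity into the statement of $H_{0}$ in (\ref{eq:null}).

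First I would verify the hypotheses of Lemma \ref{lemma:kcmd} for each $l$. The second-moment requirement $\E(|\eta^{(l)}|^{2}) < \infty$ is assumed directly. The measure condition $P_{X}^{(l)} \in \Msc_{k}^{1}(\R^{p})$ follows at once from the boundedness part of Assumption \ref{assu1}: since $k(x,x) \le K$ uniformly in $x$, we have $\int k(x,x)\,dP_{X}^{(l)}(x) \le K < \infty$. With Assumption \ref{assu1} also ensuring that $k$ is characteristic, Lemma \ref{lemma:kcmd} applied with $v_{0}=0$ yields $\Delta^{(l)} \ge 0$, with equality if and only if $\E(\eta^{(l)} \mid X^{(l)}) = 0$ almost surely under $P_{X}^{(l)}$.

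Next I would compute the conditional expectations. By the tower property and the defining equation $\E(Y^{(l)} \mid X^{(l)}) = m^{(l)}(X^{(l)})$, a one-line calculation gives
\[
\E(\eta^{(1)} \mid X^{(1)}) = m^{(1)}(X^{(1)}) - m^{(2)}(X^{(1)}),\qquad \E(\eta^{(2)} \mid X^{(2)}) = m^{(2)}(X^{(2)}) - m^{(1)}(X^{(2)}).
\]
Therefore $\Delta^{(l)} = 0$ is equivalent to $m^{(1)}(X) = m^{(2)}(X)$ holding $P_{X}^{(l)}$-almost surely.

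Finally, I would invoke the mutual absolute continuity $P_{X}^{(1)} \ll P_{X}^{(2)}$ and $P_{X}^{(2)} \ll P_{X}^{(1)}$ declared in the introduction. Under this condition, $P_{X}^{(1)}$-a.s.\ equality of $m^{(1)}$ and $m^{(2)}$ is the same event as $P_{X}^{(2)}$-a.s.\ equality, and both coincide with $H_{0}$ as formulated in (\ref{eq:null}). Chaining this with the previous step shows $\Delta^{(l)} = 0 \Leftrightarrow H_{0}$ simultaneously for $l=1$ and $l=2$. There is no substantive obstacle in the argument; the only point requiring care is the final null-set bookkeeping between the two marginals, which is precisely the purpose of the mutual absolute continuity assumption imposed early in the paper.
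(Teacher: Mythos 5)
Your proposal is correct and follows essentially the same route as the paper, which proves Theorem \ref{thm:kcmd} simply by invoking Lemma \ref{lemma:kcmd} with $v_{0}=0$; you have merely made explicit the verification of the lemma's hypotheses, the computation $\E(\eta^{(l)}\mid X^{(l)})=\pm\{m^{(1)}(X^{(l)})-m^{(2)}(X^{(l)})\}$, and the role of mutual absolute continuity in identifying the two almost-sure statements with $H_{0}$. No issues.
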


We now introduce our test statistic:
\begin{enumerate}[(i)]
    \item For $l=1,2$, estimate the regression function $m^{(l)}(\cdot)$ using machine learning methods based on the data set $\Dsc^{(l)}$, denoted by $\mhat^{(l)}(\cdot)$. 
    \item Obtain
    \begin{align*}
        \etahat^{(1)}_{i}&=Y^{(1)}_{i}-\mhat^{(2)}(X^{(1)}_{i})\quad(i=1,\ldots,N_{1}),\\
        \etahat^{(2)}_{i}&=Y^{(2)}_{i}-\mhat^{(1)}(X^{(2)}_{i})\quad(i=1,\ldots,N_{2}).
    \end{align*}
    Note that $\etahat^{(l)}_{i}$ is different from the residual $\varepsilonhat^{(l)}_{i}=Y^{(l)}_{i}-\mhat^{(l)}(X^{(l)}_{i})$. 
    \item Without loss of generality, we assume that $N_{l}=2n_{l}\ (l=1,2)$. Define
    \[
    \Deltahat^{(l)}=\frac{1}{n_{l}}\sum_{i=1}^{n_{l}}\etahat^{(l)}_{i}\etahat^{(l)}_{i+n_{l}}k(X^{(l)}_{i},X^{(l)}_{i+n_{l}}).
    \]
    \item Construct the test statistic
    \[
    T=\Deltahat^{(1)}+\Deltahat^{(2)}.
    \]
\end{enumerate}

We make the following assumptions throughout the analysis. 
\begin{assumption}\label{assu2}
    For $l=1,2$, assume that $\E(|\varepsilon^{(l)}|^{4})\le C_{1}<\infty$ for some $C_{1}>0$, and 
    \[
    \E\left\{\left|\varepsilon^{(l)}\varepsilon^{(l)\prime}k(X^{(l)},X^{(l)\prime})\right|^{2}\right\}\ge c_{1}, 
    \]
    for some $c_{1}>0$, where $(\varepsilon^{(l)\prime},X^{(l)\prime})$ denotes an independent copy of $(\varepsilon^{(l)},X^{(l)})$. 
\end{assumption}
\begin{assumption}\label{assu3}
    Suppose
    \begin{gather*}
        \E\left[\left\{\mhat^{(1)}(X^{(2)})-m^{(1)}(X^{(2)})\right\}^{2}\mid\Dsc^{(1)}\right]=o_{p}(n_{1}^{-1/2}),\\
        \E\left[\left\{\mhat^{(2)}(X^{(1)})-m^{(2)}(X^{(1)})\right\}^{2}\mid\Dsc^{(2)}\right]=o_{p}(n_{2}^{-1/2}).
    \end{gather*}
\end{assumption}
We emphasize that the condition $n_{1}\asymp n_{2}$ is not required. Hence, our procedure remains valid under imbalanced sample sizes. Assumption \ref{assu2} guarantees the applicability of the Lyapunov central limit theorem (CLT). By comparison, \citet{zhang2018conditional,li2023testing,he2025goodness} impose a stronger condition: for some constants $c_{2}$ and $C_{2}$, 
\[
0<c_{2}\le\E(|\varepsilon^{(l)}|^{2}\mid X^{(l)})\le\E^{1/2}(|\varepsilon^{(l)}|^{4}\mid X^{(l)})\le C_{2}<\infty\quad\text{almost surely},
\]
which directly implies Assumption \ref{assu2}. Assumption \ref{assu3} accommodates potential covariate shift. In the special case where $P^{(1)}_{X}=P^{(2)}_{X}$, Assumption \ref{assu3} reduces to
\[
\E\left[\left\{\mhat^{(l)}(X^{(l)})-m^{(l)}(X^{(l)})\right\}^{2}\mid\Dsc^{(l)}\right]=o_{p}(n_{l}^{-1/2})\quad(l=1,2),
\]
a condition commonly assumed in the literature on double/debiased machine learning and non- or semiparametric estimation \citep{chernozhukov2018double,kennedy2024semiparametric}. Flexible machine learning methods can be employed to estimate the regression functions $m^{(l)}(\cdot)$. Importantly, Assumption \ref{assu3} is substantially weaker than conditions requiring uniform convergence in the supremum norm, as in \citet{zhang2023classification,he2025goodness}. In fact, a sufficient condition for Assumption \ref{assu3} is
\[
\sup_{x}\left|\mhat^{(l)}(x)-m^{(l)}(x)\right|=o_{p}(n_{l}^{-1/4})\quad(l=1,2).
\]
However, achieving such a uniform convergence rate typically requires additional conditions, such as compact covariate support or covariate densities bounded away from zero. 

\begin{theorem}\label{thm:null}
    Under $H_{0}$ in (\ref{eq:null}) and Assumptions \ref{assu1}-\ref{assu3}, we have
    \[
    \left(\frac{\sigma_{1}^{2}}{n_{1}}+\frac{\sigma_{2}^{2}}{n_{2}}\right)^{-1/2}T\convd\Nsc(0,1),\quad\text{as }n_{1},n_{2}\rightarrow\infty,
    \]
    where $\sigma_{l}^{2}=\E[\{\varepsilon^{(l)}\varepsilon^{(l)\prime}k(X^{(l)},X^{(l)\prime})\}^{2}]$ and $(\varepsilon^{(l)\prime},X^{(l)\prime})$ is an independent copy of $(\varepsilon^{(l)},X^{(l)})$. 
\end{theorem}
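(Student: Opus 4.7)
The plan is to decompose $\Deltahat^{(l)}$ into an oracle piece and two remainders, apply the Lyapunov CLT to the oracle, and show the remainders are negligible relative to the normalizer $(\sigma_{1}^{2}/n_{1}+\sigma_{2}^{2}/n_{2})^{1/2}$. Under $H_{0}$ we have $m^{(1)}\equiv m^{(2)}$ on the common support, so $\etahat_{i}^{(l)}=\varepsilon_{i}^{(l)}-\xihat^{(-l)}(X_{i}^{(l)})$ with $\xihat^{(-l)}(\cdot):=\mhat^{(-l)}(\cdot)-m^{(-l)}(\cdot)$, where I use $-l$ for the index other than $l$. Expanding the product $\etahat_{i}^{(l)}\etahat_{i+n_{l}}^{(l)}$ yields
\[
\Deltahat^{(l)}=\Delta^{(l)}_{\mathrm{or}}+R_{\mathrm{lin}}^{(l)}+R_{\mathrm{quad}}^{(l)},
\]
where $\Delta^{(l)}_{\mathrm{or}}=n_{l}^{-1}\sum_{i=1}^{n_{l}}\varepsilon_{i}^{(l)}\varepsilon_{i+n_{l}}^{(l)}k(X_{i}^{(l)},X_{i+n_{l}}^{(l)})$ replaces the estimate by the truth, $R_{\mathrm{lin}}^{(l)}$ collects the two cross terms linear in $\xihat^{(-l)}$, and $R_{\mathrm{quad}}^{(l)}$ is the term quadratic in $\xihat^{(-l)}$.

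The oracle sum $T_{\mathrm{or}}:=\Delta^{(1)}_{\mathrm{or}}+\Delta^{(2)}_{\mathrm{or}}$ is a weighted sum of $n_{1}+n_{2}$ mutually independent, mean-zero random variables: the pairs $\{(i,i+n_{l})\}_{i=1}^{n_{l}}$ are disjoint within each sample, $\Dsc^{(1)}$ and $\Dsc^{(2)}$ are independent, and mean zero follows from $E(\varepsilon^{(l)}\mid X^{(l)})=0$ under $H_{0}$. Its variance equals $\sigma_{1}^{2}/n_{1}+\sigma_{2}^{2}/n_{2}$ exactly. To verify the Lyapunov condition with $\delta=2$, the bound $|k|\le K$ and the fourth-moment control in Assumption \ref{assu2} give a uniform bound on the fourth moments of the summands, while $\sigma_{l}^{2}\ge c_{1}>0$ from the same assumption controls the denominator; a direct computation then shows the Lyapunov ratio is $O(1/\min(n_{1},n_{2}))$, yielding $T_{\mathrm{or}}/(\sigma_{1}^{2}/n_{1}+\sigma_{2}^{2}/n_{2})^{1/2}\convd\Nsc(0,1)$.

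The main obstacle will be showing $R_{\mathrm{lin}}^{(l)}+R_{\mathrm{quad}}^{(l)}=o_{p}((\sigma_{1}^{2}/n_{1}+\sigma_{2}^{2}/n_{2})^{1/2})$. The essential device is sample splitting: $\xihat^{(-l)}$ depends only on $\Dsc^{(-l)}$, which is independent of $\Dsc^{(l)}$, so conditioning on $\Dsc^{(-l)}$ freezes $\xihat^{(-l)}$ while leaving $\Dsc^{(l)}$ i.i.d. For $R_{\mathrm{lin}}^{(l)}$, each summand has conditional mean zero under $H_{0}$ because $E(\varepsilon_{i}^{(l)}\mid X_{i}^{(l)},X_{i+n_{l}}^{(l)},\Dsc^{(-l)})=0$, and $|k|\le K$ together with Cauchy-Schwarz bounds the conditional second moment of a single summand by $K^{2}\,E\{(\varepsilon^{(l)})^{2}\}\,E\{\xihat^{(-l)}(X^{(l)})^{2}\mid\Dsc^{(-l)}\}=o_{p}(n_{-l}^{-1/2})$ via Assumption \ref{assu3}, so conditional Chebyshev gives $R_{\mathrm{lin}}^{(l)}=o_{p}(n_{l}^{-1/2}n_{-l}^{-1/4})$. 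For $R_{\mathrm{quad}}^{(l)}$, applying Cauchy-Schwarz across the two independent copies of $\xihat^{(-l)}$ controls both its conditional mean and its conditional variance by a constant times $E\{\xihat^{(-l)}(X^{(l)})^{2}\mid\Dsc^{(-l)}\}$, giving $R_{\mathrm{quad}}^{(l)}=o_{p}(n_{-l}^{-1/2})$. Since $\sigma_{l}^{2}\ge c_{1}$ implies $(\sigma_{1}^{2}/n_{1}+\sigma_{2}^{2}/n_{2})^{1/2}\ge c\,\min(n_{1},n_{2})^{-1/2}$ and both remainder rates are dominated by $\min(n_{1},n_{2})^{-1/2}$ with a vanishing factor, Slutsky's lemma delivers the stated normal limit.
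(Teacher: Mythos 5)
Your proposal is correct and follows essentially the same route as the paper's proof: the identical decomposition of $\Deltahat^{(l)}$ into the oracle term plus linear and quadratic remainders, the same conditioning on the opposite sample to exploit $\E(\varepsilon^{(l)}\mid X^{(l)})=0$ and Assumption \ref{assu3} (the paper formalizes the passage from conditional to unconditional $o_{p}$ rates via Lemma 6.1 of \citet{chernozhukov2018double}, which is your conditional Chebyshev step), and the same Lyapunov CLT for the disjoint-pair oracle sum with ratio $O(1/\min(n_{1},n_{2}))$. The remainder rates you obtain, $o_{p}(n_{l}^{-1/2}n_{-l}^{-1/4})$ and $o_{p}(n_{-l}^{-1/2})$, match the paper's $S_{1},S_{2},S_{3}$ bounds exactly.
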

It is worth noting that Theorem \ref{thm:null} remains valid whether the dimension $p$ is fixed or diverges as $n_{1},n_{2}\rightarrow\infty$. A natural plug-in estimator of $\sigma_{l}^{2}$ is given by
\[
\sigmahat_{l}^{2}=\frac{1}{n_{l}}\sum_{i=1}^{n_{l}}\left\{\etahat^{(l)}_{i}\etahat^{(l)}_{i+n_{l}}k(X^{(l)}_{i},X^{(l)}_{i+n_{l}})\right\}^{2}\quad(l=1,2), 
\]
The following theorem shows that $\sigmahat_{l}^{2}$ is ratio-consistent under the null hypothesis. Then we reject $H_{0}$ at a significance level $\alpha$ if $(\sigmahat_{1}^{2}/n_{1}+\sigmahat_{2}^{2}/n_{2})^{-1/2}T>z_{1-\alpha}$. 
\begin{theorem}\label{thm:ratio}
    Under $H_{0}$ in (\ref{eq:null}) and Assumptions \ref{assu1}-\ref{assu3}, we have $\sigmahat_{l}^{2}/\sigma_{l}^{2}\convp 1$ for $l=1,2$. Consequently, $(\sigmahat_{1}^{2}/n_{1}+\sigmahat_{2}^{2}/n_{2})^{-1/2}T\convd\Nsc(0,1)$ as $n_{1},n_{2}\rightarrow\infty$. 
\end{theorem}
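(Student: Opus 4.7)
My plan is to first reduce $\widehat{\sigma}_{l}^{2}$ to an oracle version built from the true errors, handle that oracle version by the weak law of large numbers, and finally control the replacement error using Assumption \ref{assu3}. Concretely, introduce
\[
\widetilde{\sigma}_{l}^{2}=\frac{1}{n_{l}}\sum_{i=1}^{n_{l}}\bigl\{\varepsilon^{(l)}_{i}\varepsilon^{(l)}_{i+n_{l}}k(X^{(l)}_{i},X^{(l)}_{i+n_{l}})\bigr\}^{2}.
\]
Because the pairs $(i,i+n_{l})$ are disjoint across $i=1,\ldots,n_{l}$, the summands are i.i.d. Assumption \ref{assu1} gives $|k|\le K$ and Assumption \ref{assu2} gives $\E(\varepsilon^{(l)4})\le C_{1}$, so each summand has finite mean $\sigma_{l}^{2}\le K^{2}\E(\varepsilon^{(l)2})^{2}\le K^{2}C_{1}$; the WLLN yields $\widetilde{\sigma}_{l}^{2}\convp\sigma_{l}^{2}$. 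Since $\sigma_{l}^{2}\ge c_{1}>0$ by Assumption \ref{assu2}, the ratio $\widetilde{\sigma}_{l}^{2}/\sigma_{l}^{2}\convp 1$ as long as this step goes through.

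The main step is to show $\widehat{\sigma}_{l}^{2}-\widetilde{\sigma}_{l}^{2}=o_{p}(1)$. Under $H_{0}$ we have $m^{(1)}=m^{(2)}$ almost surely with respect to both covariate laws, so writing $\Delta_{i}=\widehat{m}^{(-l)}(X^{(l)}_{i})-m^{(-l)}(X^{(l)}_{i})$ with $-l$ denoting the other index, we obtain $\widehat{\eta}^{(l)}_{i}=\varepsilon^{(l)}_{i}-\Delta_{i}$. Set $a_{i}=\widehat{\eta}^{(l)}_{i}\widehat{\eta}^{(l)}_{i+n_{l}}k(X^{(l)}_{i},X^{(l)}_{i+n_{l}})$ and $b_{i}$ the analogous quantity with true errors. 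Using $a_{i}^{2}-b_{i}^{2}=(a_{i}-b_{i})^{2}+2b_{i}(a_{i}-b_{i})$ together with Cauchy-Schwarz, it suffices to verify $\frac{1}{n_{l}}\sum_{i}(a_{i}-b_{i})^{2}=o_{p}(1)$, since $\frac{1}{n_{l}}\sum_{i}b_{i}^{2}=\widetilde{\sigma}_{l}^{2}=O_{p}(1)$ from the previous step. The boundedness of $k$ gives
\[
(a_{i}-b_{i})^{2}\le 3K^{2}\bigl[\varepsilon^{(l)2}_{i}\Delta_{i+n_{l}}^{2}+\varepsilon^{(l)2}_{i+n_{l}}\Delta_{i}^{2}+\Delta_{i}^{2}\Delta_{i+n_{l}}^{2}\bigr].
\]
Condition on $\Dsc^{(-l)}$: then $\Delta_{i}$ is a deterministic function of $X^{(l)}_{i}$, independent across $i$, and independent of the $\varepsilon^{(l)}_{j}$'s for $j\neq i$. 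Factoring expectations gives
\[
\E\Bigl[\tfrac{1}{n_{l}}\sum_{i}\varepsilon^{(l)2}_{i}\Delta_{i+n_{l}}^{2}\,\Big|\,\Dsc^{(-l)}\Bigr]=\E(\varepsilon^{(l)2})\cdot\E\bigl[\Delta^{2}\bigm|\Dsc^{(-l)}\bigr]=O(1)\cdot o_{p}(n_{-l}^{-1/2}),
\]
by Assumption \ref{assu3}, and similarly for the symmetric and quadratic $\Delta\Delta$ terms (the latter giving $o_{p}(n_{-l}^{-1})$). Conditional Markov then delivers $\frac{1}{n_{l}}\sum_{i}(a_{i}-b_{i})^{2}=o_{p}(1)$, as required.

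Combining the two steps yields $\widehat{\sigma}_{l}^{2}/\sigma_{l}^{2}\convp 1$. The second conclusion follows from Slutsky's theorem applied to Theorem \ref{thm:null}: writing $w_{l}=\sigma_{l}^{2}/n_{l}$ and $\widehat{w}_{l}=\widehat{\sigma}_{l}^{2}/n_{l}$, one has
\[
\frac{\widehat{w}_{1}+\widehat{w}_{2}}{w_{1}+w_{2}}-1=\frac{w_{1}(\widehat{\sigma}_{1}^{2}/\sigma_{1}^{2}-1)+w_{2}(\widehat{\sigma}_{2}^{2}/\sigma_{2}^{2}-1)}{w_{1}+w_{2}},
\]
which is $o_{p}(1)$ since the weights $w_{l}/(w_{1}+w_{2})$ lie in $[0,1]$ and hence the inhomogeneity between $n_{1}$ and $n_{2}$ is absorbed without requiring $n_{1}\asymp n_{2}$. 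Thus $(\widehat{\sigma}_{1}^{2}/n_{1}+\widehat{\sigma}_{2}^{2}/n_{2})^{-1/2}T$ shares the $\Nsc(0,1)$ limit of Theorem \ref{thm:null}.

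I expect the main obstacle to be the $\frac{1}{n_{l}}\sum\varepsilon^{(l)2}_{i}\Delta_{i+n_{l}}^{2}$ cross term, because it is here that one must exploit the sample-splitting structure $(i,i+n_{l})$ and the independence of $\Dsc^{(l)}$ from $\Dsc^{(-l)}$ to factor the conditional expectation; without the disjoint-pair construction and Assumption \ref{assu3}, a bound linear in $\E(\Delta^{2})$ would not suffice for consistency. Note that only the $o_{p}(1)$ (not $o_{p}(n_{-l}^{-1/2})$) conclusion of Assumption \ref{assu3} is used here, so the rate requirement needed for Theorem \ref{thm:null} is more than enough for variance ratio consistency.
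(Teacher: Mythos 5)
Your proof is correct and follows essentially the same route as the paper's: decompose $\sigmahat_{l}^{2}$ into the oracle sum of i.i.d.\ terms plus perturbation terms, bound the conditional (on $\Dsc^{(-l)}$) expectations of the perturbations via the boundedness of $k$ and Assumption \ref{assu3}, upgrade to unconditional $o_{p}$ statements by the conditional Markov inequality (the paper cites Lemma 6.1 of Chernozhukov et al.\ for exactly this step), and finish with Slutsky. The only cosmetic difference is that you control the cross term $2b_{i}(a_{i}-b_{i})$ by Cauchy--Schwarz against $\frac{1}{n_{l}}\sum_{i}b_{i}^{2}=O_{p}(1)$, whereas the paper bounds $\E(|Q_{2}|\mid\Dsc^{(2)})$ directly; both are valid and yield the same conclusion.
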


\begin{remark}\label{rmk:ustat}
    One might consider constructing a test statistic based on the U-statistic estimator
    \[
    \Deltahat^{(l)}_{U}=\frac{1}{n_{l}(n_{l}-1)}\sum_{1\le i\neq j\le n_{l}}\etahat^{(l)}_{i}\etahat^{(l)}_{j}k(X^{(l)}_{i},X^{(l)}_{j})\quad(l=1,2).
    \]
    We instead adopt $\Deltahat^{(l)}$ for two principal reasons. 
    First, employing $\Deltahat^{(l)}_{U}$ requires strengthening the convergence rate in Assumption \ref{assu3} from $o_{p}(n_{l}^{-1/2})$ to $o_{p}(n_{l}^{-1})$—a condition that fails even under simple parametric models. Without this stronger rate, it becomes necessary to use a substantially smaller test set. For example, Assumption 3 and Theorem 1 in \citet{he2025goodness} require the test set size to be of order $o(\sqrt{n_{l}})$. 
    Second, the asymptotic distribution of $\Deltahat^{(l)}_{U}$ differs markedly between fixed- and high-dimensional regimes. When $p$ is fixed, $\Deltahat^{(l)}_{U}$ is a degenerate U-statistic whose null distribution is an infinite weighted sum of chi-squared variables, where the weights depend on the unknown underlying distribution. This necessitates a resampling calibration procedure such as the wild bootstrap \citep{lai2021kernel}, which incurs considerable computational cost. In contrast, when $p$ diverges, a martingale CLT argument \citep{hall2014martingale} yields the asymptotic normality of $\Deltahat^{(l)}_{U}$ under Assumption \ref{assu2} and the additional condition
    \[
    \frac{\E\left(\left[\E\left\{h(Z_{1}^{(l)},Z_{2}^{(l)})h(Z_{1}^{(l)},Z_{3}^{(l)})\mid Z_{2}^{(l)},Z_{3}^{(l)}\right\}\right]^{2}\right)}{\left[\E\left\{h^{2}(Z_{1}^{(l)},Z_{2}^{(l)})\right\}\right]^{2}}\longrightarrow0,\quad\text{as }n_{l}\rightarrow\infty,
    \]
    where $Z_{i}^{(l)}=(\varepsilon_{i}^{(l)},X_{i}^{(l)})$ and $h(Z_{1}^{(l)},Z_{2}^{(l)})=\varepsilon^{(l)}_{1}\varepsilon^{(l)}_{2}k(X^{(l)}_{1},X^{(l)}_{2})$. Similar conditions appear in \citet{zhang2018conditional,li2023testing}, but they are generally unverifiable without strong prior knowledge. By contrast, Theorem \ref{thm:null} indicates that $\Deltahat^{(l)}$ is asymptotically normal under the null hypothesis in both fixed- and high-dimensional settings, without the need for resampling procedures or reliance on such practically unverifiable technical conditions. 
\end{remark}

Next, we study the asymptotic behavior of $T$ under the alternative hypothesis. 
\begin{theorem}\label{thm:alt}
    For $l=1,2$, assume that $\E[\{m^{(1)}(X^{(l)})-m^{(2)}(X^{(l)})\}^{2}]\le C_{3}<\infty$ for some $C_{3}>0$. Under $H_{a}$ in (\ref{eq:null}) and Assumptions \ref{assu1}-\ref{assu3}, we have
    \begin{align*}
        T&=\Delta^{(1)}+\Delta^{(2)}+O_{p}(n_{1}^{-1/2}+n_{2}^{-1/2})\\
        &\quad+\|m^{(1)}-m^{(2)}\|_{L^{2}(P_{X}^{(2)})}o_{p}(n_{1}^{-1/4})+\|m^{(1)}-m^{(2)}\|_{L^{2}(P_{X}^{(1)})}o_{p}(n_{2}^{-1/4}).
    \end{align*}
    Thus, sufficient conditions for consistency, i.e., $(\sigmahat_{1}^{2}/n_{1}+\sigmahat_{2}^{2}/n_{2})^{-1/2}T\convp\infty$, are $\Delta^{(1)}+\Delta^{(2)}=\omega(n_{1}^{-1/2}+n_{2}^{-1/2})$ and $\Delta^{(1)}+\Delta^{(2)}=\Omega(\|m^{(1)}-m^{(2)}\|_{L^{2}(P_{X}^{(2)})}n_{1}^{-1/4}+\|m^{(1)}-m^{(2)}\|_{L^{2}(P_{X}^{(1)})}n_{2}^{-1/4})$. 
\end{theorem}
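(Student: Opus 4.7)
The plan is to write $\hat{\eta}^{(1)}_i = \eta^{(1)}_i + \delta^{(1)}(X^{(1)}_i)$ with $\delta^{(1)}(x) = m^{(2)}(x) - \hat{m}^{(2)}(x)$, and symmetrically $\hat{\eta}^{(2)}_i = \eta^{(2)}_i + \delta^{(2)}(X^{(2)}_i)$ with $\delta^{(2)}(x) = m^{(1)}(x) - \hat{m}^{(1)}(x)$, noting that under $H_a$ we have $\E(\eta^{(l)} \mid X^{(l)}) = m^{(l)}(X^{(l)}) - m^{(3-l)}(X^{(l)})$. Expanding the product $\hat{\eta}^{(l)}_i \hat{\eta}^{(l)}_{i+n_l}$ decomposes $\Deltahat^{(l)} = A_l + B_l + C_l + D_l$, where $A_l$ uses only the true $\eta$'s, $D_l$ is the pure-$\delta$ quadratic piece, and $B_l, C_l$ are the two $\eta$-$\delta$ cross terms. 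Each nontrivial term will be controlled by conditioning on the \emph{other} data set $\Dsc^{(3-l)}$, which fixes the random function $\delta^{(l)}$.

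For the leading piece $A_l = n_l^{-1} \sum_i \eta^{(l)}_i \eta^{(l)}_{i+n_l} k(X^{(l)}_i, X^{(l)}_{i+n_l})$, the summands are i.i.d.\ with mean $\Delta^{(l)}$, and $|k|\le K$ together with $\E(\eta^{(l)2}) < \infty$ (ensured by $\E|\varepsilon^{(l)}|^{2} < \infty$ and the assumed bound $\E[(m^{(1)}-m^{(2)})^2(X^{(l)})] \le C_3$) makes the summand variance finite, so Chebyshev gives $A_l = \Delta^{(l)} + O_p(n_l^{-1/2})$. For the cross term $B_1 = n_1^{-1} \sum_i \eta^{(1)}_i \delta^{(1)}(X^{(1)}_{i+n_1}) k(X^{(1)}_i, X^{(1)}_{i+n_1})$, conditioning on $\Dsc^{(2)}$ and applying the tower property yields
\[
\E(B_1 \mid \Dsc^{(2)}) = \E\{(m^{(1)}-m^{(2)})(X^{(1)})\,\delta^{(1)}(X^{(1)\prime})\,k(X^{(1)}, X^{(1)\prime}) \mid \Dsc^{(2)}\}.
\]
Using $|k|\le K$, independence of $X^{(1)}$ and $X^{(1)\prime}$, and Cauchy--Schwarz bounds this by $K \|m^{(1)}-m^{(2)}\|_{L^2(P_X^{(1)})} \{\E[\delta^{(1)}(X^{(1)})^2 \mid \Dsc^{(2)}]\}^{1/2}$, which equals $\|m^{(1)}-m^{(2)}\|_{L^2(P_X^{(1)})}\, o_p(n_2^{-1/4})$ by Assumption \ref{assu3}. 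The conditional variance of $B_1$ is at most $K^2 n_1^{-1} \E(\eta^{(1)2})\, \E[\delta^{(1)}(X^{(1)\prime})^2 \mid \Dsc^{(2)}] = o_p(n_1^{-1} n_2^{-1/2})$, so the fluctuation around the conditional mean is $o_p(n_1^{-1/2} n_2^{-1/4})$, absorbed into $O_p(n_1^{-1/2})$. The term $C_1$ is identical by symmetry (swap $i$ and $i+n_1$), and $l=2$ follows by interchanging the roles of the two data sets, producing the $\|m^{(1)}-m^{(2)}\|_{L^2(P_X^{(2)})}\, o_p(n_1^{-1/4})$ contribution.

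For the pure-$\delta$ term $D_1$, the conditional mean obeys $|\E(D_1 \mid \Dsc^{(2)})| \le K \{\E[|\delta^{(1)}(X^{(1)})| \mid \Dsc^{(2)}]\}^2 \le K\, \E[\delta^{(1)}(X^{(1)})^2 \mid \Dsc^{(2)}] = o_p(n_2^{-1/2})$, and the conditional variance is $o_p(n_1^{-1} n_2^{-1})$, so $D_1 = o_p(n_2^{-1/2})$; likewise $D_2 = o_p(n_1^{-1/2})$. Summing the four pieces for each $l$ and adding gives the claimed expansion of $T$. For consistency, a nearly identical decomposition applied to $\sigmahat_l^2 = n_l^{-1} \sum_i (\hat{\eta}^{(l)}_i \hat{\eta}^{(l)}_{i+n_l})^2 k^2(\cdot,\cdot)$ yields $\sigmahat_l^2 = O_p(1)$ (since conditionally on $\Dsc^{(3-l)}$ the summands are i.i.d.\ with bounded conditional mean $\le K^2\{\E[\hat{\eta}^{(l)2} \mid \Dsc^{(3-l)}]\}^2$), so $(\sigmahat_1^2/n_1 + \sigmahat_2^2/n_2)^{-1/2}$ is at least of order $(n_1^{-1/2}+n_2^{-1/2})^{-1}$ in probability. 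The first imposed condition then forces $(\Delta^{(1)}+\Delta^{(2)})/(n_1^{-1/2}+n_2^{-1/2}) \to \infty$; the second makes the norm-weighted remainder $o_p(\Delta^{(1)}+\Delta^{(2)})$; and the $O_p(n_1^{-1/2}+n_2^{-1/2})$ slot contributes only $O_p(1)$ after studentization, so the test statistic diverges.

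The main technical hurdle is the cross-term analysis: one must invoke the tower property in the correct order to surface the $L^2$ norm of $m^{(1)}-m^{(2)}$ against the appropriate covariate law \emph{before} applying Cauchy--Schwarz to the estimator error; otherwise one obtains a cruder bound missing the $\|m^{(1)}-m^{(2)}\|_{L^2}$ factor and failing to match the asserted rate. The remaining arguments are routine conditional first- and second-moment bounds that exploit the uniformly bounded kernel.
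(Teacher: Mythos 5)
Your proposal is correct and follows essentially the same route as the paper: decompose $\etahat^{(l)}=\eta^{(l)}+\delta^{(l)}$, condition on the other data set, kill or bound the cross terms via conditional moment bounds, and use Cauchy--Schwarz to surface $\|m^{(1)}-m^{(2)}\|_{L^{2}(P_{X}^{(l)})}$ against the $o_{p}(n^{-1/4})$ estimation error. The only cosmetic difference is that the paper further splits $\eta^{(l)}=\varepsilon^{(l)}+(m^{(1)}-m^{(2)})$ so that the zero-conditional-mean pieces ($S_{1},S_{2}$) and the Cauchy--Schwarz pieces ($S_{4},S_{5}$) appear as separate terms, whereas you keep $\eta^{(l)}\delta^{(l)}$ together and separate its conditional mean from its conditional fluctuation---an equivalent bookkeeping choice.
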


\begin{remark}
    The slower rates $o_{p}(n_{1}^{-1/4})$ and $o_{p}(n_{2}^{-1/4})$ under $H_{a}$ arise from the first-order errors of the machine learning estimators $\mhat^{(l)}(\cdot)$. By contrast, under $H_{0}$, only second-order errors contribute to the asymptotics. Under parametric models, these rates improve to $O_{p}(n_{1}^{-1/2})$ and $O_{p}(n_{2}^{-1/2})$. 
\end{remark}

Although the test statistic $T$ is valid across all dimensional regimes, its power exhibits distinct behaviors in fixed- and high-dimensional settings. When $p$ is fixed and $\Delta^{(1)}+\Delta^{(2)}$ is a constant under $H_{a}$, the test is consistent, i.e., $(\sigmahat_{1}^{2}/n_{1}+\sigmahat_{2}^{2}/n_{2})^{-1/2}T\convp\infty$. In contrast, when $p\rightarrow\infty$, the quantity $\Delta^{(1)}+\Delta^{(2)}$ depends on $p$, and the power is determined jointly by the dimension and the sample sizes. To illustrate, consider the Gaussian kernel $k(x,x')=\exp(-(2\gamma^{2})^{-1}\|x-x'\|^{2})$, where $\gamma$ is the bandwidth parameter. Following \citet{yan2023kernel,han2024generalized}, we assume $\gamma^{2}\asymp p$ and impose the following condition. 
\begin{assumption}\label{assu4}
    Let $X^{(1)}=\mu_{1}+\Sigma_{1}^{1/2}U$ and $X^{(2)}=\mu_{2}+\Sigma_{2}^{1/2}V$, where $\mu_{l}$ and $\Sigma_{l}$ denote the mean vector and covariance matrix of $X^{(l)}$, respectively. The random vectors $U,V\in\R^{p}$ have independent components with mean zero, variance one and finite fourth moment. Moreover, the spectrum of $\Sigma_{l}$ lies within $[M^{-1},M]$ for some $M>1$. 
\end{assumption}

Assumption \ref{assu4} is standard in high-dimensional statistics and random matrix theory. We emphasize that this assumption is introduced solely for theoretical analysis and is not required for the implementation of our test. The bandwidth order $\gamma^{2}\asymp p$ is consistent with the widely used median heuristic \citep{ramdas2015adaptivity}. 

\begin{theorem}\label{thm:high}
    For $l=1,2$, let $\tau_{l}=\tr(\Sigma_{l})/\gamma^{2}$, where $\tr(\cdot)$ denotes the trace. Under $H_{a}$ in (\ref{eq:null}) and Assumption \ref{assu4}, we have
    \[
    \Delta^{(1)}+\Delta^{(2)}=e^{-\tau_{1}}\left\{\E(\eta^{(1)})\right\}^{2}+e^{-\tau_{2}}\left\{\E(\eta^{(2)})\right\}^{2}+\left\{\E(|\eta^{(1)}|^{2})+\E(|\eta^{(2)}|^{2})\right\}O(p^{-1/2}).
    \]
    When $\E(\eta^{(l)})=0$ for $l=1,2$, we further have
    \begin{align*}
        \Delta^{(1)}+\Delta^{(2)}=\frac{e^{-\tau_{1}}}{\gamma^{2}}\left\|\cov(\eta^{(1)},X^{(1)})\right\|^{2}+\frac{e^{-\tau_{2}}}{\gamma^{2}}\left\|\cov(\eta^{(2)},X^{(2)})\right\|^{2}+\left\{\E(|\eta^{(1)}|^{2})+\E(|\eta^{(2)}|^{2})\right\}O(p^{-1}).
    \end{align*}
\end{theorem}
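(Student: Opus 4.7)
The plan is to Taylor-expand the Gaussian kernel after pulling out the leading exponential factor. Writing $X^{(l)} = \mu_l + \Sigma_l^{1/2} U^{(l)}$ and letting $V = U^{(l)} - U^{(l)\prime}$, we have $\|X^{(l)} - X^{(l)\prime}\|^2 = V^T \Sigma_l V$, so
\[
k(X^{(l)}, X^{(l)\prime}) = e^{-\tau_l} \exp(-Z_l), \qquad \text{where } Z_l = \frac{V^T \Sigma_l V - 2\tr(\Sigma_l)}{2\gamma^2}.
\]
Under Assumption \ref{assu4}, $V$ has independent components with mean zero, variance two, and finite fourth moment, and $\Sigma_l$ has bounded spectrum. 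Standard moment formulas for quadratic forms then give $\E(Z_l) = 0$ and $\E(Z_l^2) = O(\tr(\Sigma_l^2)/\gamma^4) = O(p^{-1})$. Furthermore $Z_l \ge -\tau_l$, and $\tau_l = O(1)$ since $\gamma^2 \asymp p$, so $e^{-Z_l}$ is uniformly bounded above by $e^{\tau_l}$.

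For the first identity, a first-order expansion suffices. By independence of $(\eta^{(l)}, X^{(l)})$ and its copy, $\E(\eta^{(l)} \eta^{(l)\prime}) = \{\E(\eta^{(l)})\}^2$, yielding the leading term $e^{-\tau_l}\{\E(\eta^{(l)})\}^2$. For the remainder, the inequality $|e^{-Z_l} - 1| \le (1 + e^{\tau_l})|Z_l|$ (valid since $Z_l \ge -\tau_l$) combined with Cauchy-Schwarz gives
\[
|\E(\eta^{(l)} \eta^{(l)\prime}(e^{-Z_l} - 1))| \le C \, \E(|\eta^{(l)}|^2) \, \E^{1/2}(Z_l^2) = O(p^{-1/2}) \, \E(|\eta^{(l)}|^2).
\]
Summing over $l = 1, 2$ establishes the first claim.

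For the second identity, assume $\E(\eta^{(l)}) = 0$ so the leading term vanishes, and push the expansion to second order:
\[
\Delta^{(l)} = -e^{-\tau_l} \E(\eta^{(l)} \eta^{(l)\prime} Z_l) + e^{-\tau_l} \E\bigl(\eta^{(l)} \eta^{(l)\prime}(e^{-Z_l} - 1 + Z_l)\bigr).
\]
Expanding $\|X^{(l)} - X^{(l)\prime}\|^2 = \|X^{(l)} - \mu_l\|^2 + \|X^{(l)\prime} - \mu_l\|^2 - 2(X^{(l)} - \mu_l)^T(X^{(l)\prime} - \mu_l)$ and using $\E(\eta^{(l)}) = 0$ with cross-sample independence, all diagonal terms vanish and only the cross term contributes, giving
\[
\E(\eta^{(l)} \eta^{(l)\prime} Z_l) = -\gamma^{-2} \|\E(\eta^{(l)}(X^{(l)} - \mu_l))\|^2 = -\gamma^{-2} \|\cov(\eta^{(l)}, X^{(l)})\|^2.
\]
For the second-order remainder, the bound $|e^{-Z_l} - 1 + Z_l| \le C Z_l^2$ (again since $Z_l \ge -\tau_l$) reduces the task to showing $\E(|\eta^{(l)} \eta^{(l)\prime}| Z_l^2) = O(p^{-1}) \E(|\eta^{(l)}|^2)$.

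The hard part is this last bound: since $\eta^{(l)}$ need not be independent of $X^{(l)}$ and $Z_l$ depends jointly on $(X^{(l)}, X^{(l)\prime})$, one cannot factorize directly, and a naive Cauchy-Schwarz would require $\E(Z_l^4)$, which needs eighth moments of $U^{(l)}$. The remedy is to condition on $X^{(l)}$ and use the explicit formula
\[
\E(Z_l^2 \mid X^{(l)} = x) = \frac{\var(\|x - X^{(l)\prime}\|^2) + (\|x - \mu_l\|^2 - \tr(\Sigma_l))^2}{4\gamma^4},
\]
which under Assumption \ref{assu4} is $O(p^{-1})$ on a set of $X^{(l)}$ of high probability, combined with a truncation of $\|X^{(l)} - \mu_l\|^2 - \tr(\Sigma_l)$ at a level $O(\sqrt{p}\, t_p)$ for a slowly diverging $t_p$ to control the complementary event without invoking moments beyond those assumed.
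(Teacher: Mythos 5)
Your overall route coincides with the paper's: expand the Gaussian kernel around $\tau_l$, identify $e^{-\tau_l}\{\E(\eta^{(l)})\}^2$ as the zeroth-order term and $e^{-\tau_l}\gamma^{-2}\|\cov(\eta^{(l)},X^{(l)})\|^2$ as the first-order term when $\E(\eta^{(l)})=0$, and control the remainders by Cauchy--Schwarz plus moment bounds for the quadratic forms $U^{\top}\Sigma_l U-\tr(\Sigma_l)$ and $U^{\top}\Sigma_l U'$. Your first identity and your computation of $\E(\eta^{(l)}\eta^{(l)\prime}Z_l)=-\gamma^{-2}\|\cov(\eta^{(l)},X^{(l)})\|^2$ are correct and match the paper; the paper merely packages the remainder as a Lagrange second-order Taylor term with $e^{-\xi}\le 1$, which is equivalent to your $Z_l\ge-\tau_l$ device.

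The gap is in the step you yourself flag as the hard part: proving $\E(|\eta^{(l)}\eta^{(l)\prime}|Z_l^2)=O(p^{-1})\E(|\eta^{(l)}|^2)$. The paper does exactly the ``naive'' Cauchy--Schwarz you reject, bounding this by $\E(|\eta^{(l)}|^2)\E^{1/2}(Z_l^4)$ and invoking Lemmas S4 and S5 of \citet{yan2023kernel} for the fourth moments of the quadratic forms; if you decline that route (your concern about eighth moments of $U$ is a fair one), you must supply a complete substitute, and your sketch does not work as stated. First, conditioning on $X^{(l)}$ alone does not decouple $|\eta^{(l)\prime}|$ from $Z_l$, since both depend on $X^{(l)\prime}$, so the displayed formula for $\E(Z_l^2\mid X^{(l)})$ cannot simply be multiplied against $\E(|\eta^{(l)\prime}|)$; you would at minimum need to split $Z_l^2$ into the pieces $W^2$, $W'^2$, $(U^{\top}\Sigma_l U')^2$ and treat them separately. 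Second, truncating $W=\|X^{(l)}-\mu_l\|^2-\tr(\Sigma_l)$ at level $\sqrt{p}\,t_p$ leaves a retained contribution of order $p\,t_p^2/\gamma^4\asymp t_p^2/p$, which is $\omega(p^{-1})$ for any diverging $t_p$, so the claimed $O(p^{-1})$ rate is not recovered; and the complementary event is only asserted to be controllable, even though $Z_l^2$ admits no useful bound there under finite fourth moments of the components of $U$. Either adopt the paper's route through the cited fourth-moment lemmas or carry the truncation argument out in full --- as written it does not close.
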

Combining Theorems \ref{thm:alt} and \ref{thm:high}, we obtain the following conclusion. If
\[
\max\left\{\left|\E(\eta^{(1)})\right|,\left|\E(\eta^{(2)})\right|\right\}\ge c_{3},
\]
for some $c_{3}>0$, the proposed test is consistent regardless of the growth rate of the dimension $p$. By contrast, when $\E(\eta^{(l)})=0$ for both $l=1,2$, the test statistic $T$ primarily detects alternatives satisfying $\cov(\eta^{(l)},X^{(l)})\neq 0$ for some $l=1,2$, as $p\rightarrow\infty$. To capture the nonlinear dependence of $\eta^{(l)}$ on $X^{(l)}$ in high-dimensional settings, we introduce an alternative test statistic in Section \ref{sec:alt}. 
\begin{remark}
    By the definition of $\eta^{(l)}\ (l=1,2)$, we have
    \begin{align*}
        \E(\eta^{(1)})&=\E\left\{Y^{(1)}-m^{(2)}(X^{(1)})\right\}=\E\left\{m^{(1)}(X^{(1)})-m^{(2)}(X^{(1)})\right\},\\
        \E(\eta^{(2)})&=\E\left\{Y^{(2)}-m^{(1)}(X^{(2)})\right\}=\E\left\{m^{(2)}(X^{(2)})-m^{(1)}(X^{(2)})\right\}.
    \end{align*}
    In the simplest case where both $m^{(1)}(\cdot)$ and $m^{(2)}(\cdot)$ are linear, the equalities $\E(\eta^{(1)})=\E(\eta^{(2)})=0$ hold under $H_{a}$ provided that $\mu_{1}=\mu_{2}=0$ and the two models share the same intercept. Generally, however, the regression functions are often highly nonlinear in practice. Therefore, under the alternative hypothesis, it is uncommon for both $\E(\eta^{(1)})$ and $\E(\eta^{(2)})$ to vanish simultaneously, especially in the presence of covariate shift (i.e., $P_{X}^{(1)}\neq P_{X}^{(2)}$). 
\end{remark}

\section{Alternative test statistic in high dimensions}\label{sec:alt}
For $l=1,2$, write $X^{(l)}=(X^{(l)}(1),\ldots,X^{(l)}(p))^{\top}$ and $X^{(l)}_{i}=(X^{(l)}_{i}(1),\ldots,X^{(l)}_{i}(p))^{\top}\ (i=1,\ldots,N_{l})$. 
Given a kernel $k$ on $\R$, define
\[
\Delta^{(l)}_{d}=\text{KCMD}^{*}(\eta^{(l)}\mid X^{(l)}(d))=\E\left\{\eta^{(l)}\eta^{(l)\prime}k(X^{(l)}(d),X^{(l)\prime}(d))\right\}\quad(l=1,2;\ d=1,\ldots,p),
\]
where $(\eta^{(l)\prime},X^{(l)\prime})$ is an independent copy of $(\eta^{(l)},X^{(l)})$. 

Now we propose an alternative test statistic for detecting nonlinear conditional mean dependence of $\eta^{(l)}$ on $X^{(l)}$ in high dimensions:
\[
T_{a}=\sum_{l=1}^{2}\sum_{d=1}^{p}\Deltahat^{(l)}_{d},
\]
where
\[
\Deltahat^{(l)}_{d}=\frac{1}{n_{l}}\sum_{i=1}^{n_{l}}\etahat^{(l)}_{i}\etahat^{(l)}_{i+n_{l}}k(X^{(l)}_{i}(d),X^{(l)}_{i+n_{l}}(d)),
\]
and $\etahat^{(l)}_{i}$ is defined as in Section \ref{sec:test}. 

In contrast to (\ref{eq:equiv}), which is equivalent to the null hypothesis $H_{0}$, the test statistic $T_{a}$ targets the weaker null hypothesis
\begin{equation}\label{eq:weak}
    \E\{\eta^{(l)}\mid X^{(l)}(d)\}=0\quad(l=1,2;\ d=1,\ldots,p)\quad\text{almost surely}.
\end{equation}
Similar idea was first introduced by \citet{zhang2018conditional} and later adopted by \citet{li2023testing}. This relaxation is necessary in high-dimensional regimes, where the alternative space is vast due to both the growing dimension and the possibility of nonlinear dependence. 

\begin{remark}
    Our problem differs entirely from that of \citet{zhang2018conditional,li2023testing}. Furthermore, our goal is to test whether $\E(\eta^{(l)}\mid X^{(l)})=0$, rather than whether $\E(\eta^{(l)}\mid X^{(l)})=\E(\eta^{(l)})$. To illustrate, suppose that $m^{(1)}(\cdot)-m^{(2)}(\cdot)\equiv a$ for some $a\neq 0$; that is, the two regression functions differ only by a constant shift. In this case, we have $\E(\eta^{(l)}\mid X^{(l)})\neq 0$ but $\E(\eta^{(l)}\mid X^{(l)})=\E(\eta^{(l)})$. 
\end{remark}

We next establish the asymptotic properties of $T_{a}$. To this end, Assumption \ref{assu2} is reformulated as follows.
\begin{assumptionp}{assu2}\label{assu2p}
    Define 
    \[
    G(X^{(l)},X^{(l)\prime})=\sum_{d=1}^{p}k(X^{(l)}(d),X^{(l)\prime}(d))\quad(l=1,2). 
    \]
    For $l=1,2$, assume that $\E(|\varepsilon^{(l)}|^{4})\le C_{1}<\infty$ for some $C_{1}>0$, and
    \begin{equation}\label{eq:assu2p}
        \E\left\{\left|\varepsilon^{(l)}\varepsilon^{(l)\prime}G(X^{(l)},X^{(l)\prime})\right|^{2}\right\}\ge c_{4}p^{2},
    \end{equation}
    for some $c_{4}>0$, where $(\varepsilon^{(l)\prime},X^{(l)\prime})$ denotes an independent copy of $(\varepsilon^{(l)},X^{(l)})$. 
\end{assumptionp}
Condition (\ref{eq:assu2p}) holds automatically when $p$ is fixed. For diverging $p$, a sufficient condition for (\ref{eq:assu2p}) is that
\[
\min_{1\le d_{1},d_{2}\le p}\E\left\{\left|\varepsilon^{(l)}\varepsilon^{(l)\prime}\right|^{2}k(X^{(l)}(d_{1}),X^{(l)\prime}(d_{1}))k(X^{(l)}(d_{2}),X^{(l)\prime}(d_{2}))\right\}\ge c_{4},
\]
for some $c_{4}>0$.

\begin{theorem}\label{thm:nulla}
    Under $H_{0}$ in (\ref{eq:null}) and Assumptions \ref{assu1}, \ref{assu2p} and \ref{assu3}, we have
    \[
    \left(\frac{\xi_{1}^{2}}{n_{1}}+\frac{\xi_{2}^{2}}{n_{2}}\right)^{-1/2}T_{a}\convd\Nsc(0,1),\quad\text{as }n_{1},n_{2}\rightarrow\infty,
    \]
    where $\xi_{l}^{2}=\E[\{\varepsilon^{(l)}\varepsilon^{(l)\prime}G(X^{(l)},X^{(l)\prime})\}^{2}]$ and $(\varepsilon^{(l)\prime},X^{(l)\prime})$ is an independent copy of $(\varepsilon^{(l)},X^{(l)})$. 
\end{theorem}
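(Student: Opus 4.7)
The plan is to reduce the proof to an adaptation of the argument for Theorem~\ref{thm:null} applied to the ``aggregated'' kernel $G(x,x')=\sum_{d=1}^{p}k(x(d),x'(d))$. Swapping the order of summation gives
\[
T_{a}=\sum_{l=1}^{2}\frac{1}{n_{l}}\sum_{i=1}^{n_{l}}\etahat_{i}^{(l)}\etahat_{i+n_{l}}^{(l)}G(X_{i}^{(l)},X_{i+n_{l}}^{(l)}),
\]
so $T_{a}$ has exactly the form of $T$ with $k$ replaced by $G$. The only genuinely new feature is that $G$ is bounded only by $pK$ rather than by a constant, so the Lyapunov step must absorb an extra factor of $p$; Assumption~\ref{assu2p} is designed precisely to supply the matching $p^{2}$ lower bound on $\xi_{l}^{2}$.

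Under $H_{0}$ I would write $r_{i}^{(l)}=m^{(3-l)}(X_{i}^{(l)})-\mhat^{(3-l)}(X_{i}^{(l)})$ so that $\etahat_{i}^{(l)}=\varepsilon_{i}^{(l)}+r_{i}^{(l)}$, and decompose $T_{a}=\sum_{l=1}^{2}(S_{l}+C_{1,l}+C_{2,l}+R_{l})$, where $S_{l}=n_{l}^{-1}\sum_{i}\varepsilon_{i}^{(l)}\varepsilon_{i+n_{l}}^{(l)}G(X_{i}^{(l)},X_{i+n_{l}}^{(l)})$ is the oracle statistic and $C_{1,l},C_{2,l},R_{l}$ collect the terms with one or two factors of $r$. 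The first step is the oracle CLT: the summands of $S_{l}$ are i.i.d.\ and centered by $\E(\varepsilon^{(l)}\mid X^{(l)})=0$, with second moment $\xi_{l}^{2}$. Using $|G|\le pK$ together with independence of $(\varepsilon^{(l)},X^{(l)})$ and $(\varepsilon^{(l)\prime},X^{(l)\prime})$ gives $\E|\varepsilon^{(l)}\varepsilon^{(l)\prime}G|^{4}\le(pK)^{4}C_{1}^{2}$, so the Lyapunov ratio is bounded by $K^{4}C_{1}^{2}/(c_{4}^{2}n_{l})\to 0$. Independence of the two samples plus standard Slutsky arithmetic then yields $(\xi_{1}^{2}/n_{1}+\xi_{2}^{2}/n_{2})^{-1/2}(S_{1}+S_{2})\convd\Nsc(0,1)$.

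The second step is to show each remaining term is of smaller order than the standardization $\sqrt{\xi_{1}^{2}/n_{1}+\xi_{2}^{2}/n_{2}}$, which for every $l$ is bounded below by $\xi_{l}/\sqrt{n_{l}}\ge\sqrt{c_{4}}\,p/\sqrt{n_{l}}$. Conditioning on $\Dsc^{(3-l)}$ freezes $\mhat^{(3-l)}$ and hence each $r_{i}^{(l)}$ as a deterministic function of $X_{i}^{(l)}$. For $C_{1,l}=n_{l}^{-1}\sum_{i}\varepsilon_{i}^{(l)}r_{i+n_{l}}^{(l)}G$ the conditional mean vanishes by $\E(\varepsilon^{(l)}\mid X^{(l)})=0$, and the conditional variance is at most $n_{l}^{-1}(pK)^{2}\E|\varepsilon^{(l)}|^{2}\E[(r^{(l)})^{2}\mid\Dsc^{(3-l)}]=o_{p}(p^{2}n_{l}^{-1}n_{3-l}^{-1/2})$ by Assumption~\ref{assu3}, so $C_{1,l}=o_{p}(p\,n_{l}^{-1/2}n_{3-l}^{-1/4})$, i.e.\ $o_{p}(n_{3-l}^{-1/4})$ relative to the standardization; the symmetric $C_{2,l}$ is handled identically. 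For the remainder, Cauchy--Schwarz, independence of $r_{i}^{(l)}$ and $r_{i+n_{l}}^{(l)}$ conditional on $\Dsc^{(3-l)}$, and $|G|\le pK$ yield $\E[|R_{l}|\mid\Dsc^{(3-l)}]\le pK\cdot\E[(r^{(l)})^{2}\mid\Dsc^{(3-l)}]=o_{p}(p/\sqrt{n_{3-l}})$, which is $o_{p}(1)$ relative to the standardization.

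The main obstacle is step one: controlling the fourth-moment blowup $\E|\varepsilon^{(l)}\varepsilon^{(l)\prime}G|^{4}=O(p^{4})$ caused by $|G|=O(p)$. Assumption~\ref{assu2p} is tailored to this, since the matching lower bound $\xi_{l}^{4}=\Omega(p^{4})$ cancels the blowup and leaves behind only the clean $1/n_{l}$ factor needed for Lyapunov. Once this is settled, combining the oracle limit with the $o_{p}$ bounds on $C_{1,l},C_{2,l},R_{l}$ via Slutsky completes the proof.
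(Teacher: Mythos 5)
Your proposal is correct and follows essentially the same route as the paper's proof: the same decomposition of $T_{a}$ into the oracle statistic plus two cross terms and a quadratic remainder, the same conditional moment bounds driven by $|G|\le Kp$ and Assumption \ref{assu3}, and the same Lyapunov verification in which the $p^{4}$ fourth-moment blowup is cancelled by the $\xi_{l}^{4}=\Omega(p^{4})$ lower bound from Assumption \ref{assu2p}. The only cosmetic difference is that the paper invokes Lemma 6.1 of \citet{chernozhukov2018double} to convert the conditional $o_{p}$ bounds into unconditional ones and verifies the Lyapunov condition for the combined two-sample array rather than sample by sample.
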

\begin{remark}
    Heuristically, the statistic $T_{a}$ can be regarded as a version of the statistic $T$ with kernel $G$ defined in Assumption \ref{assu2p}. Accordingly, the theoretical analysis proceeds in a manner analogous to that presented in Section \ref{sec:test}. 
\end{remark}

A natural plug-in estimator of $\xi_{l}^{2}$ is given by
\[
\xihat_{l}^{2}=\frac{1}{n_{l}}\sum_{i=1}^{n_{l}}\left\{\etahat^{(l)}_{i}\etahat^{(l)}_{i+n_{l}}\sum_{d=1}^{p}k(X^{(l)}_{i}(d),X^{(l)}_{i+n_{l}}(d))\right\}^{2}\quad(l=1,2). 
\]
The following theorem shows that $\xihat_{l}^{2}$ is ratio-consistent under $H_{0}$. Then we reject $H_{0}$ at a significance level $\alpha$ if $(\xihat_{1}^{2}/n_{1}+\xihat_{2}^{2}/n_{2})^{-1/2}T_{a}>z_{1-\alpha}$. 
\begin{theorem}\label{thm:ratioa}
    Under $H_{0}$ in (\ref{eq:null}) and Assumptions \ref{assu1}, \ref{assu2p} and \ref{assu3}, we have $\xihat_{l}^{2}/\xi_{l}^{2}\convp 1$ for $l=1,2$. As a result, $(\xihat_{1}^{2}/n_{1}+\xihat_{2}^{2}/n_{2})^{-1/2}T_{a}\convd\Nsc(0,1)$ as $n_{1},n_{2}\rightarrow\infty$. 
\end{theorem}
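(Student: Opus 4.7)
The plan is to prove $\widehat{\xi}_l^2/\xi_l^2\convp 1$ for each $l=1,2$, after which the studentized limit theorem follows from Theorem \ref{thm:nulla} by Slutsky's theorem, since the ratio $(\xi_1^2/n_1+\xi_2^2/n_2)/(\widehat{\xi}_1^2/n_1+\widehat{\xi}_2^2/n_2)$ is a convex combination of $\xi_l^2/\widehat{\xi}_l^2$ and hence tends to one. The argument is a direct analogue of the proof of Theorem \ref{thm:ratio}, viewing $T_a$ as the statistic $T$ computed with the kernel $G$: I would interpose an oracle estimator built from the true errors and then control the noise introduced by the regression fits.

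For the oracle step, set $\tilde{\xi}_l^2=n_l^{-1}\sum_{i=1}^{n_l}\{\varepsilon_i^{(l)}\varepsilon_{i+n_l}^{(l)}G(X_i^{(l)},X_{i+n_l}^{(l)})\}^2$. This is an average of $n_l$ i.i.d.\ nonnegative summands with mean exactly $\xi_l^2$. Assumption \ref{assu1} yields $|G|\le pK$, which together with Assumption \ref{assu2p} gives both the upper bound $\xi_l^2\le p^2K^2C_1$ and $\E[(\varepsilon^{(l)}\varepsilon^{(l)\prime}G)^4]\le p^4K^4C_1^2$; combined with the lower bound $\xi_l^2\ge c_4p^2$ from Assumption \ref{assu2p}, Chebyshev's inequality delivers $\var(\tilde{\xi}_l^2)/\xi_l^4=O(1/n_l)$ and hence $\tilde{\xi}_l^2/\xi_l^2\convp 1$ and $\tilde{\xi}_l^2=O_p(p^2)$.

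For the estimation-noise step, under $H_0$ write $\widehat{\eta}_i^{(l)}=\varepsilon_i^{(l)}+r_i^{(l)}$ with $r_i^{(l)}=m^{(\bar{l})}(X_i^{(l)})-\widehat{m}^{(\bar{l})}(X_i^{(l)})$ and $\bar{l}$ the opposite index, so that Assumption \ref{assu3} reads $\E[(r^{(l)})^2\mid\Dsc^{(\bar{l})}]=o_p(n_{\bar{l}}^{-1/2})$. Letting $u_i=\widehat{\eta}_i^{(l)}\widehat{\eta}_{i+n_l}^{(l)}G_i$ and $v_i=\varepsilon_i^{(l)}\varepsilon_{i+n_l}^{(l)}G_i$, the identity $u_i^2-v_i^2=2v_i(u_i-v_i)+(u_i-v_i)^2$ decomposes $\widehat{\xi}_l^2-\tilde{\xi}_l^2=A+B$ with $A$ and $B$ the averages of the respective pieces. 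Expanding $u_i-v_i=(\varepsilon_i r_{i+n_l}+r_i\varepsilon_{i+n_l}+r_ir_{i+n_l})G_i$, applying $(a+b+c)^2\le 3(a^2+b^2+c^2)$ with $G_i^2\le p^2K^2$, and taking $\E[\,\cdot\mid\Dsc^{(\bar{l})}]$ while factoring across the two independent halves of the sample produces $\E[B\mid\Dsc^{(\bar{l})}]= p^2\,o_p(n_{\bar{l}}^{-1/2})$, hence $B=o_p(p^2n_{\bar{l}}^{-1/2})$. Cauchy-Schwarz then gives $|A|\le 2(\tilde{\xi}_l^2\cdot B)^{1/2}=o_p(p^2n_{\bar{l}}^{-1/4})$, using $\tilde{\xi}_l^2=O_p(p^2)$ from the oracle step. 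Dividing $A+B$ by $\xi_l^2\ge c_4p^2$ yields $(\widehat{\xi}_l^2-\tilde{\xi}_l^2)/\xi_l^2=o_p(1)$, completing the proof of the first claim.

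The main obstacle is the linear cross term $A$: the pointwise bound $|G|\le pK$ is the only tool available for handling $G^2$, and it only gives $B$ of order $p^2n_{\bar{l}}^{-1/2}$; Cauchy-Schwarz then pushes $A$ to order $p^2n_{\bar{l}}^{-1/4}$, which is just small enough relative to the lower bound $\xi_l^2\ge c_4p^2$ furnished by Assumption \ref{assu2p}. The self-normalization is therefore tight, and the high-dimensional growth of $G$ is absorbed exactly by the corresponding growth of $\xi_l^2$. The second assertion then follows by Slutsky's theorem combined with Theorem \ref{thm:nulla}.
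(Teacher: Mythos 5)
Your proof is correct and follows essentially the same route as the paper's: the identity $u_i^2-v_i^2=2v_i(u_i-v_i)+(u_i-v_i)^2$ is exactly the paper's $Q_2+Q_1$ decomposition, controlled via $|G|\le Kp$, Assumption \ref{assu2p}'s lower bound $\xi_l^2\ge c_4p^2$, and the conditional moment bounds from Assumption \ref{assu3}. The only cosmetic difference is that you bound the cross term by an empirical Cauchy--Schwarz against $\tilde{\xi}_l^2=O_p(p^2)$, whereas the paper bounds $\E(|Q_2|\mid\Dsc^{(2)})$ directly and invokes the conditional Markov lemma; both yield the same $o_p(p^2n_{\bar{l}}^{-1/4})$ rate.
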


Although the statistic $T_{a}$ is initially designed for high-dimensional settings, Theorems \ref{thm:nulla} and \ref{thm:ratioa} remain valid whether the dimension $p$ is fixed or diverging. In the fixed-dimensional case, however, the test based on $T$ introduced in Section \ref{sec:test} is preferable, as it is consistent against all alternatives, whereas $T_{a}$ can merely detect alternatives for which the weaker null hypothesis (\ref{eq:weak}) is violated. % This distinction will be illustrated by Example \ref{exam3} in Section \ref{sec:num}. 

\begin{theorem}\label{thm:alta}
    For $l=1,2$, assume that $\E[\{m^{(1)}(X^{(l)})-m^{(2)}(X^{(l)})\}^{2}]\le C_{3}<\infty$ for some $C_{3}>0$. Under $H_{a}$ in (\ref{eq:null}) and Assumptions \ref{assu1}, \ref{assu2p} and \ref{assu3}, we have
    \begin{align*}
        T_{a}&=\sum_{d=1}^{p}\Delta^{(1)}_{d}+\sum_{d=1}^{p}\Delta^{(2)}_{d}+O_{p}(n_{1}^{-1/2}p+n_{2}^{-1/2}p)\\
        &\quad+\|m^{(1)}-m^{(2)}\|_{L^{2}(P_{X}^{(2)})}o_{p}(n_{1}^{-1/4}p)+\|m^{(1)}-m^{(2)}\|_{L^{2}(P_{X}^{(1)})}o_{p}(n_{2}^{-1/4}p).
    \end{align*}
    Hence, sufficient conditions for consistency, i.e., $(\xihat_{1}^{2}/n_{1}+\xihat_{2}^{2}/n_{2})^{-1/2}T_{a}\convp\infty$, are $\sum_{d=1}^{p}\Delta^{(1)}_{d}+\sum_{d=1}^{p}\Delta^{(2)}_{d}=\omega(n_{1}^{-1/2}p+n_{2}^{-1/2}p)$ and $\sum_{d=1}^{p}\Delta^{(1)}_{d}+\sum_{d=1}^{p}\Delta^{(2)}_{d}=\Omega(\|m^{(1)}-m^{(2)}\|_{L^{2}(P_{X}^{(2)})}n_{1}^{-1/4}p+\|m^{(1)}-m^{(2)}\|_{L^{2}(P_{X}^{(1)})}n_{2}^{-1/4}p)$. 
\end{theorem}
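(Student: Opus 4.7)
The plan is to view $T_a$ as the statistic $T$ with the kernel $k$ replaced by the coordinate-sum kernel $G(x,x')=\sum_{d=1}^{p}k(x(d),x'(d))$, and to mirror the decomposition that underlies Theorem~\ref{thm:alt}. Because $|G|\le pK$ by Assumption~\ref{assu1}, every step that previously used the sup-norm bound on $k$ now picks up exactly one extra factor of $p$, and this is precisely the $p$ decorating each stochastic term in the target expansion.

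First I would write $\etahat_i^{(l)}=\eta_i^{(l)}-r_i^{(l)}$ with $\eta_i^{(l)}=Y_i^{(l)}-m^{(l')}(X_i^{(l)})$ and $r_i^{(l)}=\mhat^{(l')}(X_i^{(l)})-m^{(l')}(X_i^{(l)})$ for $l'\ne l$, and expand $\etahat_i^{(l)}\etahat_{i+n_l}^{(l)}$ into four pieces. The oracle piece $\frac{1}{n_l}\sum_{i=1}^{n_l}\eta_i^{(l)}\eta_{i+n_l}^{(l)}G(X_i^{(l)},X_{i+n_l}^{(l)})$ is an i.i.d.\ average with mean $\sum_{d=1}^{p}\Delta_d^{(l)}$ and second moment bounded by $(pK)^2\,\E(\eta^{(l)2})^2=O(p^2)$ (using Assumption~\ref{assu2p} and the $L^2$ bound on $\Delta m:=m^{(1)}-m^{(2)}$), so Chebyshev delivers a deviation of order $O_p(p/\sqrt{n_l})$. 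For each cross piece carrying a single $r$, I would condition on $\Dsc^{(l')}$, which freezes $r_i^{(l)}$ as a function of $X_i^{(l)}$; since $\E(\varepsilon^{(l)}\mid X^{(l)})=0$, only the systematic part $\Delta m(X^{(l)})$ of $\eta^{(l)}$ survives the conditional expectation, and Cauchy--Schwarz combined with Assumption~\ref{assu3} gives a conditional mean of order $pK\,\|m^{(1)}-m^{(2)}\|_{L^2(P_X^{(l)})}\,o_p(n_{l'}^{-1/4})$, with conditional variance of smaller order and hence absorbed. The pure-$rr'$ piece is dispatched by a single Cauchy--Schwarz to $pK\,\E(r^{(l)2}\mid\Dsc^{(l')})=o_p(p/\sqrt{n_{l'}})$, which is absorbed into the $O_p(n_{l'}^{-1/2}p)$ remainder. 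Summing all four contributions over $l=1,2$ yields the claimed expansion of $T_a$.

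For the consistency statement, I would show $\xihat_l^2=O_p(p^2)$ under $H_a$ by a plug-in moment calculation: the bound $|G|\le pK$ reduces $\xihat_l^2$ to $(pK)^2$ times an i.i.d.\ average of $(\etahat_i^{(l)}\etahat_{i+n_l}^{(l)})^2$, and Assumption~\ref{assu2p}, the $L^2$ bound on $\Delta m$, and Assumption~\ref{assu3} together force $\E(\etahat^{(l)2}\mid\Dsc^{(l')})=O_p(1)$, so this average concentrates at an $O_p(p^2)$ quantity. Hence $(\xihat_1^2/n_1+\xihat_2^2/n_2)^{-1/2}\gtrsim p^{-1}(n_1^{-1}+n_2^{-1})^{-1/2}$ with probability tending to one, and the two stated rate conditions on $\sum_{d}\Delta_d^{(l)}$ force both the $O_p$ and the $o_p$ remainders in the expansion to be of smaller order than the leading term, so the studentized statistic diverges in probability.

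The main obstacle will be the careful bookkeeping of $p$ factors at every step. The bound $|G|\le pK$ is crude, and it is easy to accidentally incur a $p^2$ factor somewhere that fails to be absorbed into the $\sqrt{n_l}$ normalization; one must verify that each of the four pieces---particularly the variance of the oracle piece and the conditional variance of the cross pieces---contributes only a single factor of $p$ after scaling, and that this matches the $\xihat_l\asymp p$ scaling of the denominator. Everything else closely parallels the proof of Theorem~\ref{thm:alt}, so the real work lies in this accounting.
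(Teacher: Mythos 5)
Your proposal is correct and follows essentially the same route as the paper's proof: decompose $\etahat^{(l)}_{i}$ into $\eta^{(l)}_{i}$ plus the estimation error, use $|G|\le Kp$ together with Cauchy--Schwarz and Assumption \ref{assu3} to bound the cross and quadratic error terms (the paper splits your single-$r$ cross piece further into an $\varepsilon r'$ part and a $(m^{(1)}-m^{(2)})r'$ part, which is only a cosmetic difference from your conditional-mean/conditional-variance treatment), apply Chebyshev to the oracle average, and control the studentizer at the $O_p(p^2)$ scale. The only cosmetic divergence is that the paper proves the sharper ratio consistency $\xihat_{l}^{2}=\E[\{\eta^{(l)}\eta^{(l)\prime}G(X^{(l)},X^{(l)\prime})\}^{2}]\{1+o_p(1)\}$ rather than just $\xihat_{l}^{2}=O_p(p^2)$, but your weaker bound suffices for the stated consistency conclusion.
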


When $\max\{|\E(\eta^{(1)})|,|\E(\eta^{(2)})|\}\ge c_{3}$ for some $c_{3}>0$, then $\sum_{d=1}^{p}\Delta^{(1)}_{d}+\sum_{d=1}^{p}\Delta^{(2)}_{d}\asymp p$. Consequently, similar to the test based on $T$, the test constructed from $T_{a}$ remains consistent regardless of the growth rate of the dimension $p$. Moreover, when the dimension $p$ increases and $\E(\eta^{(l)})=0$ for both $l=1,2$, the statistic $T_{a}$, through the quantity $\sum_{d=1}^{p}\Delta^{(1)}_{d}+\sum_{d=1}^{p}\Delta^{(2)}_{d}$, is able to capture componentwise nonlinear dependence between $\eta^{(l)}$ and $X^{(l)}$, in contrast to the test based on $T$. % This property will be corroborated by Example \ref{exam4} in Section \ref{sec:num}. 
\begin{remark}
    In practice, without prior knowledge of the underlying dependence structure, both tests can be applied irrespective of the dimensionality. The extensive numerical studies demonstrate the reliable performance of both test statistics. 
\end{remark}

% \section{Power enhancement by aggregation}

\section{Numerical studies}\label{sec:num}
\subsection{Monte Carlo simulations}
In this section, we conduct several Monte Carlo simulations to evaluate the finite-sample performance of our proposed tests, denoted by $T$, the statistic from Section \ref{sec:test}, and $T_a$, the alternative statistic from Section \ref{sec:alt}.
We assess their empirical size and power across various settings, including low-dimensional and high-dimensional covariates, and different forms of nonlinear regression functions.
All results are based on 1000 replications for each setting, and the nominal significance level is set to $\alpha = 0.05$.

For our proposed methods, we employ XGBoost \citep{chen2016xgboost} to estimate the regression functions $m^{(1)}(\cdot)$ and $m^{(2)}(\cdot)$.
We opt for XGBoost owing to its strong predictive performance across diverse data structures, its computational efficiency, and its relative simplicity in tuning compared with more complex methods like neural networks, which typically require larger sample sizes and more extensive hyperparameter optimization.
% For the XGBoost estimators, 
We set a learning rate of $0.15$ and a maximum tree depth of $3$. To prevent both underfitting and overfitting, the optimal number of boosting iterations is determined using 5-fold cross-validation. At the same time, we incorporate an early stopping mechanism, which terminates the training process if the validation performance does not improve for 20 consecutive rounds, with an overall limit of $1000$ iterations.
For both test statistics, we use a Gaussian kernel, $k(x, x') = \exp(-(2\gamma^2)^{-1} \| x - x' \|^2)$.
The bandwidth parameter $\gamma$ is selected via the median heuristic \citep{gretton2012kernel}, and is computed separately for each sample.
Specifically, for the statistic $T$, a scalar bandwidth is calculated for each sample based on the median of pairwise Euclidean distances within that sample.
For the statistic $T_a$, a dimension-specific bandwidth vector is computed for each sample, where each component is derived from the median of pairwise distances along the corresponding covariate dimension within that sample.

Across all examples, we set the sample sizes to be equal, $N_1 = N_2 = N$, and consider different values of $N$ in the simulations.
For each group $l \in \{1, 2\}$, we consider the regression model
\begin{equation*}
    Y^{(l)} = m^{(l)}(X^{(l)}) + \varepsilon^{(l)}.
\end{equation*}
The covariates for the first group are drawn from a standard normal distribution, $X^{(1)} \sim \Nsc(0, I_p)$.
To introduce a covariate shift, the covariates for the second group are drawn from a normal distribution with an autoregressive covariance structure, $X^{(2)} \sim \Nsc(0, \Sigma)$, where $\Sigma_{ij} = 0.3^{|i - j|}$.
The errors are generated as $\varepsilon^{(1)} \sim \Nsc(0, 0.5^2)$ and $\varepsilon^{(2)}$ from a t-distribution with $5$ degrees of freedom, scaled to have a variance of $0.5^2$.
The difference between the two regression functions is controlled by a vector $\beta\in\R^{p}$, where the null hypothesis $H_0$ corresponds to $\|\beta\| = 0$. 

\begin{example}\label{exam1}
    % \begin{align*}
    %     Y^{(1)}&=\sqrt{\left\{X^{(1)}(1)\right\}^{2}+\left\{X^{(1)}(2)\right\}^{2}+\sum_{d=1}^{p}\beta_{d}\left\{X^{(1)}(d)\right\}^{2}}+\varepsilon^{(1)},\\
    %     Y^{(2)}&=\sqrt{\left\{X^{(2)}(1)\right\}^{2}+\left\{X^{(2)}(2)\right\}^{2}}+\varepsilon^{(2)}.
    % \end{align*}
    % This example considers a smooth, nonlinear deviation from the null model.
    % The regression functions are:
    Write $x=(x(1),\ldots,x(p))^{\top}\in\R^{p}$. Let
    \begin{align*}
        m^{(1)}(x) &= \sqrt{\{x(1)\}^2 + \{x(2)\}^2 + \sum_{d=1}^{p} \beta_{d} \{x(d)\}^2}, \\
        m^{(2)}(x) &= \sqrt{\{x(1)\}^2 + \{x(2)\}^2}.
    \end{align*}
\end{example}

\begin{example}\label{exam2}
    % Consider
    % \begin{align*}
    %     Y^{(1)}&=\exp\left\{X^{(1)}(1)\right\}+\exp\left\{X^{(1)}(2)\right\}+\sum_{d=1}^{p}\beta_{d}\exp\left\{X^{(1)}(d)\right\}+\varepsilon^{(1)},\\
    %     Y^{(2)}&=\exp\left\{X^{(2)}(1)\right\}+\exp\left\{X^{(2)}(2)\right\}+\varepsilon^{(2)}.
    % \end{align*}
    % This setting is designed to test a highly nonlinear, exponential difference between the functions.
    Let
    \begin{align*}
        m^{(1)}(x) &= \exp\{ x(1) \} + \exp\{ x(2) \} + \sum_{d=1}^{p} \beta_d \exp\{ x(d) \}, \\
        m^{(2)}(x) &= \exp\{ x(1) \} + \exp\{ x(2) \}.
    \end{align*}
\end{example}

\begin{example}\label{exam3}
    % Consider
    % \begin{align*}
    %     Y^{(1)}&=\left\{X^{(1)}(1)+X^{(1)}(2)+\sum_{d=1}^{p}\beta_{d}X^{(1)}(d)\right\}^{2}+\varepsilon^{(1)},\\
    %     Y^{(2)}&=\left\{X^{(2)}(1)+X^{(2)}(2)\right\}^{2}+\varepsilon^{(2)}.
    % \end{align*}
    % This example introduces a difference characterized by quadratic and interaction terms.
    Let
    \begin{align*}
        m^{(1)}(x) &= x(1) + x(2) + \left\{ \sum_{d=1}^{p} \beta_d x(d) \right\}^2, \\
        m^{(2)}(x) &= x(1) + x(2).
    \end{align*}
\end{example}

For each example, we consider a low-dimensional setting with $p = 5$ and a high-dimensional setting with $p = 50$.
In the low-dimensional case, all the elements of signal vector $\beta$ are non-zero and have equal magnitude.
In the high-dimensional case, we investigate two types of alternatives: a \textit{sparse} alternative, where only the first two elements of $\beta$ are non-zero with equal magnitude, and a \textit{dense} alternative, where the first $20$ elements of $\beta$ are non-zero, also with equal magnitude.

For the low-dimensional case ($p = 5$), our tests are compared with the smoothing-based procedure of \citet{lavergne2001equality}. 
Consistent with their simulation study, a uniform kernel is employed, and the bandwidth is selected via the rule-of-thumb method.

The simulation results are presented in Tables \ref{tab:eg1}--\ref{tab:eg3}.
Under the null hypothesis (i.e., $\|\beta\| = 0$), the empirical sizes of our proposed tests, $T$ and $T_a$, are consistently close to the $5\%$ nominal level across all settings.
This indicates that our tests maintain proper size control even in the presence of covariate shift, non-Gaussian errors, and high dimensionality. 

As expected, the power of both tests, $T$ and $T_a$, increases with the signal strength $\|\beta\|$ and the sample size $N$, approaching one as either becomes sufficiently large. In high-dimensional settings ($p=50$), both procedures perform well against both sparse and dense alternatives. By contrast, even in low-dimensional settings ($p=5$), the test of \citet{lavergne2001equality} is overly conservative, exhibiting an empirical size of zero across all scenarios. 
As a result, it displays no empirical power and fails to detect the nonlinear alternatives in any of the three examples. 

\begin{table}[htbp]
    \centering
    \caption{Empirical size and power for Example \ref{exam1}.}\label{tab:eg1}
    % \begin{adjustbox}{width=0.6\textwidth}
    % Table generated by Excel2LaTeX from sheet 'sindex_sqrt_nooracle'
    \begin{tabular}{cccccccccc}
    \toprule
            &          &          & \multicolumn{1}{c}{$T$} & \multicolumn{1}{c}{$T_a$} & \multicolumn{1}{c}{Lavergne} &          & \multicolumn{1}{c}{$T$} & \multicolumn{1}{c}{$T_a$} & \multicolumn{1}{c}{Lavergne} \\
    \midrule
            & \multicolumn{1}{c}{$\|\beta\|$} &          & \multicolumn{3}{c}{$N=200$}    &          & \multicolumn{3}{c}{$N=400$} \\
    \cmidrule{4-6}\cmidrule{8-10}\multirow{6}[2]{*}{$p=5$} & 0        &          & 0.063    & 0.064    & 0.000    &          & 0.048    & 0.049    & 0.000  \\
            & 0.06     &          & 0.093    & 0.094    & 0.000    &          & 0.110    & 0.110    & 0.000  \\
            & 0.12     &          & 0.172    & 0.182    & 0.000    &          & 0.261    & 0.275    & 0.000  \\
            & 0.18     &          & 0.309    & 0.309    & 0.000    &          & 0.487    & 0.504    & 0.001  \\
            & 0.24     &          & 0.512    & 0.516    & 0.000    &          & 0.742    & 0.765    & 0.000  \\
            & 0.3      &          & 0.684    & 0.705    & 0.000    &          & 0.914    & 0.925    & 0.000  \\
    \midrule
            & \multicolumn{1}{c}{$\|\beta\|$} &          & \multicolumn{3}{c}{$N=500$}    &          & \multicolumn{3}{c}{$N=1000$} \\
    \cmidrule{4-6}\cmidrule{8-10}\multicolumn{1}{c}{\multirow{6}[2]{*}{$p=50$, dense}} & 0        &          & 0.059    & 0.059    &          &          & 0.055    & 0.060    &  \\
            & 0.02     &          & 0.077    & 0.082    &          &          & 0.098    & 0.102    &  \\
            & 0.04     &          & 0.178    & 0.195    &          &          & 0.246    & 0.252    &  \\
            & 0.06     &          & 0.366    & 0.362    &          &          & 0.574    & 0.574    &  \\
            & 0.08     &          & 0.571    & 0.578    &          &          & 0.841    & 0.845    &  \\
            & 0.1      &          & 0.808    & 0.813    &          &          & 0.962    & 0.964    &  \\
    \midrule
            & \multicolumn{1}{c}{$\|\beta\|$} &          & \multicolumn{3}{c}{$N=500$}    &          & \multicolumn{3}{c}{$N=1000$} \\
    \cmidrule{4-6}\cmidrule{8-10}\multicolumn{1}{c}{\multirow{6}[2]{*}{$p=50$, sparse}} & 0        &          & 0.059    & 0.059    &          &          & 0.055    & 0.060    &  \\
            & 0.1      &          & 0.094    & 0.092    &          &          & 0.099    & 0.101    &  \\
            & 0.2      &          & 0.184    & 0.180    &          &          & 0.241    & 0.245    &  \\
            & 0.3      &          & 0.319    & 0.319    &          &          & 0.548    & 0.556    &  \\
            & 0.4      &          & 0.560    & 0.566    &          &          & 0.846    & 0.851    &  \\
            & 0.5      &          & 0.783    & 0.787    &          &          & 0.975    & 0.974    &  \\
    \bottomrule
    \end{tabular}%
    % \end{adjustbox}
\end{table}

% We consider $p=5$ and $p=50$. 
% Here $\beta\in\R^{p}$ controls the signal for the difference between the two regression functions. 
% For $p=5$, all elements of $\beta$ have equal magnitude, and we vary $\|\beta\|$ from $0$ to ... 
% For $p=50$, we consider: 
% (i) the sparse setting, in which the first two elements of $\beta$ can be nonzero with equal magnitude, and we vary $\|\beta\|$ from $0$ to ...; 
% (ii) the dense setting, in which the first twenty elements of $\beta$ can be nonzero with equal magnitude, and we vary $\|\beta\|$ from $0$ to ...

\begin{table}[htbp]
    \centering
    \caption{Empirical size and power for Example \ref{exam2}.}\label{tab:eg2}
    % \begin{adjustbox}{width=0.6\textwidth}
    % Table generated by Excel2LaTeX from sheet 'sindex_exp nooracle'
    \begin{tabular}{cccccccccc}
    \toprule
            &          &          & \multicolumn{1}{c}{$T$} & \multicolumn{1}{c}{$T_a$} & \multicolumn{1}{c}{Lavergne} &          & \multicolumn{1}{c}{$T$} & \multicolumn{1}{c}{$T_a$} & \multicolumn{1}{c}{Lavergne} \\
    \midrule
            & \multicolumn{1}{c}{$\|\beta\|$} &          & \multicolumn{3}{c}{$N=200$}    &          & \multicolumn{3}{c}{$N=400$} \\
    \cmidrule{4-6}\cmidrule{8-10}\multirow{6}[2]{*}{$p=5$} & 0        &          & 0.055    & 0.052    & 0.000    &          & 0.052    & 0.056    & 0.000  \\
            & 0.02     &          & 0.065    & 0.054    & 0.000    &          & 0.072    & 0.069    & 0.000  \\
            & 0.04     &          & 0.103    & 0.098    & 0.001    &          & 0.146    & 0.126    & 0.000  \\
            & 0.06     &          & 0.186    & 0.168    & 0.000    &          & 0.349    & 0.300    & 0.000  \\
            & 0.08     &          & 0.319    & 0.292    & 0.000    &          & 0.582    & 0.536    & 0.000  \\
            & 0.1      &          & 0.504    & 0.450    & 0.000    &          & 0.836    & 0.785    & 0.000  \\
    \midrule
            & \multicolumn{1}{c}{$\|\beta\|$} &          & \multicolumn{3}{c}{$N=500$}    &          & \multicolumn{3}{c}{$N=1000$} \\
    \cmidrule{4-6}\cmidrule{8-10}\multicolumn{1}{c}{\multirow{6}[2]{*}{$p=50$, dense}} & 0        &          & 0.047    & 0.051    &          &          & 0.059    & 0.055    &  \\
            & 0.02     &          & 0.091    & 0.099    &          &          & 0.164    & 0.160    &  \\
            & 0.04     &          & 0.485    & 0.477    &          &          & 0.781    & 0.769    &  \\
            & 0.06     &          & 0.857    & 0.859    &          &          & 0.986    & 0.981    &  \\
            & 0.08     &          & 0.970    & 0.965    &          &          & 0.994    & 0.995    &  \\
            & 0.1      &          & 0.991    & 0.990    &          &          & 0.999    & 0.999    &  \\
    \midrule
            & \multicolumn{1}{c}{$\|\beta\|$} &          & \multicolumn{3}{c}{$N=500$}    &          & \multicolumn{3}{c}{$N=1000$} \\
    \cmidrule{4-6}\cmidrule{8-10}\multicolumn{1}{c}{\multirow{6}[2]{*}{$p=50$, sparse}} & 0        &          & 0.047    & 0.051    &          &          & 0.059    & 0.055    &  \\
            & 0.04     &          & 0.055    & 0.058    &          &          & 0.061    & 0.059    &  \\
            & 0.08     &          & 0.171    & 0.162    &          &          & 0.299    & 0.296    &  \\
            & 0.12     &          & 0.425    & 0.415    &          &          & 0.731    & 0.727    &  \\
            & 0.16     &          & 0.685    & 0.672    &          &          & 0.935    & 0.929    &  \\
            & 0.2      &          & 0.869    & 0.869    &          &          & 0.984    & 0.984    &  \\
    \bottomrule
    \end{tabular}%
    % \end{adjustbox}
\end{table}

\begin{table}[htbp]
    \centering
    \caption{Empirical size and power for Example \ref{exam3}.}\label{tab:eg3}
    % \begin{adjustbox}{width=0.6\textwidth}
    % Table generated by Excel2LaTeX from sheet 'sindex_quad_nooracle'
    \begin{tabular}{cccccccccc}
    \toprule
            &          &          & \multicolumn{1}{c}{$T$} & \multicolumn{1}{c}{$T_a$} & \multicolumn{1}{c}{Lavergne} &          & \multicolumn{1}{c}{$T$} & \multicolumn{1}{c}{$T_a$} & \multicolumn{1}{c}{Lavergne} \\
    \midrule
            & \multicolumn{1}{c}{$\|\beta\|$} &          & \multicolumn{3}{c}{$N=200$}    &          & \multicolumn{3}{c}{$N=400$} \\
    \cmidrule{4-6}\cmidrule{8-10}\multirow{6}[2]{*}{$p=5$} & 0        &          & 0.066    & 0.067    & 0.000    &          & 0.066    & 0.061    & 0.000  \\
            & 0.12     &          & 0.084    & 0.074    & 0.000    &          & 0.056    & 0.060    & 0.000  \\
            & 0.24     &          & 0.098    & 0.090    & 0.000    &          & 0.092    & 0.096    & 0.000  \\
            & 0.36     &          & 0.195    & 0.193    & 0.000    &          & 0.286    & 0.289    & 0.000  \\
            & 0.48     &          & 0.447    & 0.456    & 0.000    &          & 0.790    & 0.807    & 0.000  \\
            & 0.6      &          & 0.874    & 0.880    & 0.000    &          & 0.992    & 0.991    & 0.001  \\
    \midrule
            & \multicolumn{1}{c}{$\|\beta\|$} &          & \multicolumn{3}{c}{$N=500$}    &          & \multicolumn{3}{c}{$N=1000$} \\
    \cmidrule{4-6}\cmidrule{8-10}\multicolumn{1}{c}{\multirow{6}[2]{*}{$p=50$, dense}} & 0        &          & 0.068    & 0.064    &          &          & 0.051    & 0.052    &  \\
            & 0.1      &          & 0.058    & 0.060    &          &          & 0.065    & 0.063    &  \\
            & 0.2      &          & 0.077    & 0.075    &          &          & 0.086    & 0.084    &  \\
            & 0.3      &          & 0.157    & 0.157    &          &          & 0.232    & 0.237    &  \\
            & 0.4      &          & 0.423    & 0.422    &          &          & 0.727    & 0.726    &  \\
            & 0.5      &          & 0.877    & 0.877    &          &          & 0.994    & 0.992    &  \\
    \midrule
            & \multicolumn{1}{c}{$\|\beta\|$} &          & \multicolumn{3}{c}{$N=500$}    &          & \multicolumn{3}{c}{$N=1000$} \\
    \cmidrule{4-6}\cmidrule{8-10}\multicolumn{1}{c}{\multirow{6}[2]{*}{$p=50$, sparse}} & 0        &          & 0.068    & 0.064    &          &          & 0.051    & 0.052    &  \\
            & 0.1      &          & 0.058    & 0.060    &          &          & 0.069    & 0.068    &  \\
            & 0.2      &          & 0.072    & 0.075    &          &          & 0.102    & 0.101    &  \\
            & 0.3      &          & 0.166    & 0.161    &          &          & 0.274    & 0.277    &  \\
            & 0.4      &          & 0.482    & 0.481    &          &          & 0.779    & 0.789    &  \\
            & 0.5      &          & 0.905    & 0.906    &          &          & 0.995    & 0.995    &  \\
    \bottomrule
    \end{tabular}%
    % \end{adjustbox}
\end{table}

\subsection{Airfoil data example}
We apply our proposed tests to the Airfoil Self-Noise dataset \citep{airfoil}, available from the UCI Machine Learning Repository.
This dataset, which has also been used in previous studies such as \citet{tibshirani2019conformal,hu2024two}, includes $1503$ measurements from wind tunnel experiments on NASA airfoil sections.
The response is the scaled sound pressure level, and there are five covariates: frequency, angle of attack, chord length, free-stream velocity, and suction side displacement thickness.
Following common practice, we apply a logarithmic transformation to the frequency and thickness variables.

Since the original dataset does not have a predefined two-sample structure, we construct two scenarios to evaluate our tests.
First, to assess the empirical size, we simulate the null hypothesis by randomly partitioning the entire dataset into two nearly equal groups with sizes $N_1=752$ and $N_2=751$.
% As the total sample size is odd, this random partition results in two nearly equal groups. 
Under this setup, the underlying regression functions for the two groups are expected to be identical. 
We repeat this procedure $500$ times to estimate the Type I error rate.
Second, to evaluate the power, we construct an alternative hypothesis by partitioning the data based on the response. 
Specifically, one group comprises observations with sound pressure levels at or below the median, while the other contains the remaining observations. 
This procedure intentionally creates two groups with distinct regression functions. 
To avoid trivial separation of the covariate supports, we randomly swap $5\%$ of the samples between the two groups. 
% This results in a challenging scenario where the two regression functions are different but not easily separable.

The results of our analysis demonstrate the effectiveness of the proposed tests.
Under the null hypothesis, where the data is randomly partitioned, the empirical rejection rates for the statistics $T$ and $T_a$ are $4.6\%$ and $5.8\%$, respectively.
Both rates are close to the nominal $5\%$ level, confirming that our tests maintain proper size control on this real-world dataset.
Under the alternative hypothesis, constructed by splitting the data based on the response, both tests strongly reject the null hypothesis.
The test statistic $T$ yields a $p$-value of $1.574 \times 10^{-19}$, while the statistic $T_a$ gives a $p$-value of $4.726 \times 10^{-17}$.
% These extremely small $p$-values indicate that our methods are powerful in detecting the genuine difference between the regression functions created by the data partitioning scheme.

\subsection{Communities and Crime data example}

% Null hypothesis's setting: random split, without covariate shift 

% \begin{itemize}
%     \item Rejection rate under null: $0.068$ (for $T$) and $0.042$ (for $T_a$)
%     \item $p$-value under alternative: $9.810992 \times 10^{-84}$ (for $T$) and $3.548331 \times 10^{-73}$ (for $T_a$)
% \end{itemize}

We further validate our tests on the Communities and Crime dataset \citep{communities_and_crime}, also from the UCI Machine Learning Repository.
This dataset, which has also been used by \citet{joshi2025conformal}, combines socio-economic data from the 1990 US Census, law enforcement data from the 1990 US LEMAS survey, and crime data from the 1995 FBI Uniform Crime Report.
The dataset contains 1994 communities and 128 attributes.
The response is the per capita violent crime rate, and covariates include information like population, median income, and employment rates.
After removing non-predictive attributes and any covariates containing missing values, we proceed with the analysis. 
This preprocessing results in a final covariate dimension of $p = 99$.

To create a two-sample testing problem, we again construct two scenarios following the same logic as in the Airfoil data analysis. 
To evaluate the empirical size, we create a null scenario by randomly splitting the dataset into two samples of equal size, a procedure repeated 500 times. 
To assess the power, we generate an alternative hypothesis by dividing the communities into two groups based on their crime rates (at or below vs. above the median) and then swapping $5\%$ of the data points between the groups to ensure overlapping covariate distributions.

Our tests perform well in this setting.
Under the null hypothesis, the empirical rejection rates are $5.6\%$ for the statistic $T$ and $5.2\%$ for the statistic $T_a$, both of which are close to the nominal $5\%$ level, demonstrating accurate size control.
Under the alternative hypothesis, the tests show decisive power, yielding $p$-values of $2.128 \times 10^{-76}$ for $T$ and $9.260 \times 10^{-83}$ for $T_a$.
% These results confirm that our methods detected the differences generated by the data partitioning scheme in this real-world data setting.

\section{Discussion}
To conclude, we outline several directions for future research. First, it would be valuable to extend the framework to the comparison of $L$ regression functions for a general $L\ge 2$, or even for $L\rightarrow\infty$. Second, beyond the conditional mean, it would be of interest to employ machine learning methods to test the equality of two conditional variance functions \citep{pardo2015tests}, or more generally, two conditional distributions \citep{yan2022distance}. Finally, our framework may find utility in other contexts, such as change point detection \citep{nie2022detection}.

\newpage
\begin{appendices}
\counterwithin{theorem}{section}
\counterwithin{definition}{section}
\renewcommand{\theequation}{A.\arabic{equation}}
\setcounter{equation}{0}

\section{Technical details} 
\begin{proof}[Proof of Lemma \ref{lemma:kcmd}]
    For any $v_{0}\in\R$, we have
    \begin{align*}
        \text{KCMD}^{*}(V\mid U)&=\E\{(V-v_{0})(V'-v_{0})k(U,U')\}\\
        &=\E\{\langle(V-v_{0})k(\cdot,U),(V'-v_{0})k(\cdot,U')\rangle_{\Hsc_{k}}\}\\
        &=\|\E\{(V-v_{0})k(\cdot,U)\}\|_{\Hsc_{k}}^{2}\ge 0. 
    \end{align*}
    Also, $\text{KCMD}^{*}(V\mid U)=0$ if and only if $\E\{(V-v_{0})k(\cdot,U)\}$ is the zero element in $\Hsc_{k}$. 

    Let $\sigma(U)$ be the $\sigma$-algebra generated by $U$. Define a finite signed Borel measure $\lambda$ on $\sigma(U)$
    \[
    \lambda(B)=\E\{(V-v_{0})\mathbbm{1}_{B}\},\quad\forall B\in\sigma(U).
    \]
    Then the kernel mean embedding of $\lambda\circ U^{-1}$ into the RKHS $\Hsc_{k}$ is exactly $\E\{(V-v_{0})k(\cdot,U)\}$. When the kernel $k$ is characteristic, $\E\{(V-v_{0})k(\cdot,U)\}$ is the zero element in $\Hsc_{k}$ if and only if $\lambda$ is the zero measure on $\sigma(U)$. Thus, $\text{KCMD}^{*}(V\mid U)=0$ if and only if
    \[
    \E\{(V-v_{0})\mathbbm{1}_{B}\}=0,\quad\forall B\in\sigma(U),
    \]
    which holds if and only if 
    \[
    \E(V\mid U)=v_{0}\quad\text{almost surely},
    \]
    by the definition of conditional expectation.
\end{proof}

\begin{proof}[Proof of Theorem \ref{thm:kcmd}]
    The results directly follow from Lemma \ref{lemma:kcmd} with $v_{0}=0$.  
\end{proof}

\begin{proof}[Proof of Theorem \ref{thm:null}]
    Under $H_{0}$, we have
    \begin{align*}
        \etahat^{(1)}_{i}&=Y^{(1)}_{i}-\mhat^{(2)}(X^{(1)}_{i})\\
        &=Y^{(1)}_{i}-m^{(2)}(X^{(1)}_{i})+m^{(2)}(X^{(1)}_{i})-\mhat^{(2)}(X^{(1)}_{i})\\
        &=Y^{(1)}_{i}-m^{(1)}(X^{(1)}_{i})+m^{(2)}(X^{(1)}_{i})-\mhat^{(2)}(X^{(1)}_{i})\\
        &=\varepsilon^{(1)}_{i}+m^{(2)}(X^{(1)}_{i})-\mhat^{(2)}(X^{(1)}_{i}),
    \end{align*}
    and thus
    \begin{align*}
        \Deltahat^{(1)}&=\frac{1}{n_{1}}\sum_{i=1}^{n_{1}}\left\{\varepsilon^{(1)}_{i}+m^{(2)}(X^{(1)}_{i})-\mhat^{(2)}(X^{(1)}_{i})\right\}\left\{\varepsilon^{(1)}_{i+n_{1}}+m^{(2)}(X^{(1)}_{i+n_{1}})-\mhat^{(2)}(X^{(1)}_{i+n_{1}})\right\}k(X^{(1)}_{i},X^{(1)}_{i+n_{1}})\\
        &=\frac{1}{n_{1}}\sum_{i=1}^{n_{1}}\varepsilon^{(1)}_{i}\varepsilon^{(1)}_{i+n_{1}}k(X^{(1)}_{i},X^{(1)}_{i+n_{1}})+S_{1}+S_{2}+S_{3},
    \end{align*}
    where
    \begin{align*}
        S_{1}&=-\frac{1}{n_{1}}\sum_{i=1}^{n_{1}}\varepsilon^{(1)}_{i}\left\{\mhat^{(2)}(X^{(1)}_{i+n_{1}})-m^{(2)}(X^{(1)}_{i+n_{1}})\right\}k(X^{(1)}_{i},X^{(1)}_{i+n_{1}}),\\
        S_{2}&=-\frac{1}{n_{1}}\sum_{i=1}^{n_{1}}\left\{\mhat^{(2)}(X^{(1)}_{i})-m^{(2)}(X^{(1)}_{i})\right\}\varepsilon^{(1)}_{i+n_{1}}k(X^{(1)}_{i},X^{(1)}_{i+n_{1}}),\\
        S_{3}&=\frac{1}{n_{1}}\sum_{i=1}^{n_{1}}\left\{\mhat^{(2)}(X^{(1)}_{i})-m^{(2)}(X^{(1)}_{i})\right\}\left\{\mhat^{(2)}(X^{(1)}_{i+n_{1}})-m^{(2)}(X^{(1)}_{i+n_{1}})\right\}k(X^{(1)}_{i},X^{(1)}_{i+n_{1}}).
    \end{align*}
    Under Assumptions \ref{assu1}-\ref{assu3},
    \begin{align*}
        \E(S_{1}\mid\Dsc^{(2)})&=-\E\left[\varepsilon^{(1)}\left\{\mhat^{(2)}(X^{(1)\prime})-m^{(2)}(X^{(1)\prime})\right\}k(X^{(1)},X^{(1)\prime})\mid\Dsc^{(2)}\right]\\
        &=-\E\left[\E(\varepsilon^{(1)}\mid X^{(1)})\left\{\mhat^{(2)}(X^{(1)\prime})-m^{(2)}(X^{(1)\prime})\right\}k(X^{(1)},X^{(1)\prime})\mid\Dsc^{(2)}\right]\\
        &=0,\\
        \var(S_{1}\mid\Dsc^{(2)})&=\frac{1}{n_{1}}\var\left[\varepsilon^{(1)}\left\{\mhat^{(2)}(X^{(1)\prime})-m^{(2)}(X^{(1)\prime})\right\}k(X^{(1)},X^{(1)\prime})\mid\Dsc^{(2)}\right]\\
        &\le\frac{1}{n_{1}}\E\left[\left|\varepsilon^{(1)}\left\{\mhat^{(2)}(X^{(1)\prime})-m^{(2)}(X^{(1)\prime})\right\}k(X^{(1)},X^{(1)\prime})\right|^{2}\mid\Dsc^{(2)}\right]\\
        &\le\frac{1}{n_{1}}K^{2}\E(|\varepsilon^{(1)}|^{2})\E\left\{\left|\mhat^{(2)}(X^{(1)\prime})-m^{(2)}(X^{(1)\prime})\right|^{2}\mid\Dsc^{(2)}\right\}\\
        &=o_{p}(n_{1}^{-1}n_{2}^{-1/2}),\\
        \E(S_{2}\mid\Dsc^{(2)})&=\E(S_{1}\mid\Dsc^{(2)})=0,\\
        \var(S_{2}\mid\Dsc^{(2)})&=\var(S_{1}\mid\Dsc^{(2)})=o_{p}(n_{1}^{-1}n_{2}^{-1/2}),\\
        \E(|S_{3}|\mid\Dsc^{(2)})&\le\E\left\{\left|\mhat^{(2)}(X^{(1)})-m^{(2)}(X^{(1)})\right|\left|\mhat^{(2)}(X^{(1)\prime})-m^{(2)}(X^{(1)\prime})\right||k(X^{(1)},X^{(1)\prime})|\mid\Dsc^{(2)}\right\}\\
        &\le K\left[\E\left\{\left|\mhat^{(2)}(X^{(1)})-m^{(2)}(X^{(1)})\right|\mid\Dsc^{(2)}\right\}\right]^{2}\\
        &\le K\E\left\{\left|\mhat^{(2)}(X^{(1)})-m^{(2)}(X^{(1)})\right|^{2}\mid\Dsc^{(2)}\right\}\\
        &=o_{p}(n_{2}^{-1/2}).
    \end{align*}
    Using Lemma 6.1 in \citet{chernozhukov2018double}, we have
    \begin{align*}
        S_{1}&=o_{p}(n_{1}^{-1/2}n_{2}^{-1/4}),\\
        S_{2}&=o_{p}(n_{1}^{-1/2}n_{2}^{-1/4}),\\
        S_{3}&=o_{p}(n_{2}^{-1/2}).
    \end{align*}
    
    A similar analysis can be carried out for $\Deltahat^{(2)}$. Consequently, under $H_{0}$,
    \[
    T=\sum_{l=1}^{2}\frac{1}{n_{l}}\sum_{i=1}^{n_{l}}\varepsilon^{(l)}_{i}\varepsilon^{(l)}_{i+n_{l}}k(X^{(l)}_{i},X^{(l)}_{i+n_{l}})+o_{p}(n_{1}^{-1/2}+n_{2}^{-1/2}).
    \]
    Applying the Lyapunov CLT, we have
    \[
    \left(\frac{\sigma_{1}^{2}}{n_{1}}+\frac{\sigma_{2}^{2}}{n_{2}}\right)^{-1/2}T=\left(\frac{\sigma_{1}^{2}}{n_{1}}+\frac{\sigma_{2}^{2}}{n_{2}}\right)^{-1/2}\sum_{l=1}^{2}\frac{1}{n_{l}}\sum_{i=1}^{n_{l}}\varepsilon^{(l)}_{i}\varepsilon^{(l)}_{i+n_{l}}k(X^{(l)}_{i},X^{(l)}_{i+n_{l}})+o_{p}(1)\convd\Nsc(0,1),
    \]
    where we use Assumptions \ref{assu1} and \ref{assu2} to verify the Lyapunov condition:
    \[
    \lim_{n_{1},n_{2}\rightarrow\infty}\frac{\sum_{l=1}^{2}n_{l}^{-3}\E\left\{\left|\varepsilon^{(l)}\varepsilon^{(l)\prime}k(X^{(l)},X^{(l)\prime})\right|^{4}\right\}}{\left[\sum_{l=1}^{2}n_{l}^{-1}\E\left\{\left|\varepsilon^{(l)}\varepsilon^{(l)\prime}k(X^{(l)},X^{(l)\prime})\right|^{2}\right\}\right]^{2}}\le\lim_{n_{1},n_{2}\rightarrow\infty}\frac{C_{1}^{2}K^{4}\sum_{l=1}^{2}n_{l}^{-3}}{c_{1}^{2}\sum_{l=1}^{2}n_{l}^{-2}}=0. 
    \]
\end{proof}

\begin{proof}[Proof of Theorem \ref{thm:ratio}]
    Under $H_{0}$, we have
    \begin{align*}
        \sigmahat_{1}^{2}&=\frac{1}{n_{1}}\sum_{i=1}^{n_{1}}\left[\left\{\varepsilon^{(1)}_{i}+m^{(2)}(X^{(1)}_{i})-\mhat^{(2)}(X^{(1)}_{i})\right\}\left\{\varepsilon^{(1)}_{i+n_{1}}+m^{(2)}(X^{(1)}_{i+n_{1}})-\mhat^{(2)}(X^{(1)}_{i+n_{1}})\right\}k(X^{(1)}_{i},X^{(1)}_{i+n_{1}})\right]^{2}\\
        &=\frac{1}{n_{1}}\sum_{i=1}^{n_{1}}\left\{\varepsilon^{(1)}_{i}\varepsilon^{(1)}_{i+n_{1}}k(X^{(1)}_{i},X^{(1)}_{i+n_{1}})\right\}^{2}+Q_{1}+Q_{2},
    \end{align*}
    where
    \begin{align*}
        Q_{1}&=\frac{1}{n_{1}}\sum_{i=1}^{n_{1}}\Big[-\varepsilon^{(1)}_{i}\left\{\mhat^{(2)}(X^{(1)}_{i+n_{1}})-m^{(2)}(X^{(1)}_{i+n_{1}})\right\}-\left\{\mhat^{(2)}(X^{(1)}_{i})-m^{(2)}(X^{(1)}_{i})\right\}\varepsilon^{(1)}_{i+n_{1}}\\
        &\quad+\left\{\mhat^{(2)}(X^{(1)}_{i})-m^{(2)}(X^{(1)}_{i})\right\}\left\{\mhat^{(2)}(X^{(1)}_{i+n_{1}})-m^{(2)}(X^{(1)}_{i+n_{1}})\right\}\Big]^{2}\left\{k(X^{(1)}_{i},X^{(1)}_{i+n_{1}})\right\}^{2},\\
        Q_{2}&=\frac{2}{n_{1}}\sum_{i=1}^{n_{1}}\varepsilon^{(1)}_{i}\varepsilon^{(1)}_{i+n_{1}}\Big[-\varepsilon^{(1)}_{i}\left\{\mhat^{(2)}(X^{(1)}_{i+n_{1}})-m^{(2)}(X^{(1)}_{i+n_{1}})\right\}-\left\{\mhat^{(2)}(X^{(1)}_{i})-m^{(2)}(X^{(1)}_{i})\right\}\varepsilon^{(1)}_{i+n_{1}}\\
        &\quad+\left\{\mhat^{(2)}(X^{(1)}_{i})-m^{(2)}(X^{(1)}_{i})\right\}\left\{\mhat^{(2)}(X^{(1)}_{i+n_{1}})-m^{(2)}(X^{(1)}_{i+n_{1}})\right\}\Big]\left\{k(X^{(1)}_{i},X^{(1)}_{i+n_{1}})\right\}^{2}. 
    \end{align*}
    By similar calculations as in the proof of Theorem \ref{thm:null}, one can derive
    \begin{align*}
        \E(|Q_{1}|\mid\Dsc^{(2)})&=o_{p}(n_{2}^{-1/2}),\\
        \E(|Q_{2}|\mid\Dsc^{(2)})&=o_{p}(n_{2}^{-1/4}).
    \end{align*}
    Using Lemma 6.1 in \citet{chernozhukov2018double},
    \begin{align*}
        Q_{1}&=o_{p}(n_{2}^{-1/2}),\\
        Q_{2}&=o_{p}(n_{2}^{-1/4}).
    \end{align*}
    As 
    \[
    \frac{1}{n_{1}}\sum_{i=1}^{n_{1}}\left\{\varepsilon^{(1)}_{i}\varepsilon^{(1)}_{i+n_{1}}k(X^{(1)}_{i},X^{(1)}_{i+n_{1}})\right\}^{2}=\sigma_{1}^{2}+O_{p}(n_{1}^{-1/2}),
    \]
    and $\sigma_{1}^{2}\asymp 1$ under Assumptions \ref{assu1} and \ref{assu2}, we have 
    \[
    \frac{\sigmahat_{1}^{2}-\sigma_{1}^{2}}{\sigma_{1}^{2}}\convp 0.
    \] 
    
    A similar analysis can be carried out for $\sigmahat_{2}^{2}$. The results follow.
\end{proof}

\begin{proof}[Proof of Theorem \ref{thm:alt}]
    Under $H_{a}$, we have
    \begin{align*}
        \etahat^{(1)}_{i}&=\eta^{(1)}_{i}+m^{(2)}(X^{(1)}_{i})-\mhat^{(2)}(X^{(1)}_{i})\\
        &=\varepsilon^{(1)}_{i}+m^{(1)}(X^{(1)}_{i})-m^{(2)}(X^{(1)}_{i})+m^{(2)}(X^{(1)}_{i})-\mhat^{(2)}(X^{(1)}_{i}),
    \end{align*}
    and thus
    \begin{align*}
        \Deltahat^{(1)}&=\frac{1}{n_{1}}\sum_{i=1}^{n_{1}}\left\{\eta^{(1)}_{i}+m^{(2)}(X^{(1)}_{i})-\mhat^{(2)}(X^{(1)}_{i})\right\}\left\{\eta^{(1)}_{i+n_{1}}+m^{(2)}(X^{(1)}_{i+n_{1}})-\mhat^{(2)}(X^{(1)}_{i+n_{1}})\right\}k(X^{(1)}_{i},X^{(1)}_{i+n_{1}})\\
        &=\frac{1}{n_{1}}\sum_{i=1}^{n_{1}}\eta^{(1)}_{i}\eta^{(1)}_{i+n_{1}}k(X^{(1)}_{i},X^{(1)}_{i+n_{1}})+S_{1}+S_{2}+S_{3}+S_{4}+S_{5},
    \end{align*}
    where $S_{1},S_{2},S_{3}$ are defined as in the proof of Theorem \ref{thm:null}, and
    \begin{align*}
        S_{4}&=-\frac{1}{n_{1}}\sum_{i=1}^{n_{1}}\left\{m^{(1)}(X^{(1)}_{i})-m^{(2)}(X^{(1)}_{i})\right\}\left\{\mhat^{(2)}(X^{(1)}_{i+n_{1}})-m^{(2)}(X^{(1)}_{i+n_{1}})\right\}k(X^{(1)}_{i},X^{(1)}_{i+n_{1}}),\\
        S_{5}&=-\frac{1}{n_{1}}\sum_{i=1}^{n_{1}}\left\{\mhat^{(2)}(X^{(1)}_{i})-m^{(2)}(X^{(1)}_{i})\right\}\left\{m^{(1)}(X^{(1)}_{i+n_{1}})-m^{(2)}(X^{(1)}_{i+n_{1}})\right\}k(X^{(1)}_{i},X^{(1)}_{i+n_{1}}).
    \end{align*}
    Following the proof of Theorem \ref{thm:null}, we still have
    \begin{align*}
        S_{1}&=o_{p}(n_{1}^{-1/2}n_{2}^{-1/4}),\\
        S_{2}&=o_{p}(n_{1}^{-1/2}n_{2}^{-1/4}),\\
        S_{3}&=o_{p}(n_{2}^{-1/2}).
    \end{align*}
    Moreover, under Assumptions \ref{assu1} and \ref{assu3}, 
    \begin{align*}
        \E(|S_{4}|\mid\Dsc^{(2)})&\le\E\left\{\left|m^{(1)}(X^{(1)})-m^{(2)}(X^{(1)})\right|\left|\mhat^{(2)}(X^{(1)\prime})-m^{(2)}(X^{(1)\prime})\right||k(X^{(1)},X^{(1)\prime})|\mid\Dsc^{(2)}\right\}\\
        &\le K\sqrt{\E\left\{\left|m^{(1)}(X^{(1)})-m^{(2)}(X^{(1)})\right|^{2}\right\}}\sqrt{\E\left\{\left|\mhat^{(2)}(X^{(1)\prime})-m^{(2)}(X^{(1)\prime})\right|^{2}\mid\Dsc^{(2)}\right\}}\\
        &=\|m^{(1)}-m^{(2)}\|_{L^{2}(P_{X}^{(1)})}o_{p}(n_{2}^{-1/4}).
    \end{align*}
    Using Lemma 6.1 in \citet{chernozhukov2018double}, we have
    \[
    S_{4}=\|m^{(1)}-m^{(2)}\|_{L^{2}(P_{X}^{(1)})}o_{p}(n_{2}^{-1/4}).
    \]
    Likewise, one can derive $S_{5}=\|m^{(1)}-m^{(2)}\|_{L^{2}(P_{X}^{(1)})}o_{p}(n_{2}^{-1/4})$. 
    
    A similar analysis can be carried out for $\Deltahat^{(2)}$. Consequently, under $H_{a}$,
    \begin{align*}
        T&=\sum_{l=1}^{2}\frac{1}{n_{l}}\sum_{i=1}^{n_{l}}\eta^{(l)}_{i}\eta^{(l)}_{i+n_{l}}k(X^{(l)}_{i},X^{(l)}_{i+n_{l}})+o_{p}(n_{1}^{-1/2}+n_{2}^{-1/2})\\
        &\quad+\|m^{(1)}-m^{(2)}\|_{L^{2}(P_{X}^{(2)})}o_{p}(n_{1}^{-1/4})+\|m^{(1)}-m^{(2)}\|_{L^{2}(P_{X}^{(1)})}o_{p}(n_{2}^{-1/4})\\
        &=\Delta^{(1)}+\Delta^{(2)}+O_{p}(n_{1}^{-1/2}+n_{2}^{-1/2})\\
        &\quad+\|m^{(1)}-m^{(2)}\|_{L^{2}(P_{X}^{(2)})}o_{p}(n_{1}^{-1/4})+\|m^{(1)}-m^{(2)}\|_{L^{2}(P_{X}^{(1)})}o_{p}(n_{2}^{-1/4}).
    \end{align*}
    As $\sigmahat_{l}^{2}=\E[\{\eta^{(l)}\eta^{(l)\prime}k(X^{(l)},X^{(l)\prime})\}^{2}]\{1+o_{p}(1)\}$ for $l=1,2$, the results follow.
\end{proof}

\begin{proof}[Proof of Theorem \ref{thm:high}]
    Using the second-order Taylor expansion, for any $\tau\ge 0$,
    \[
    k(x,x')=e^{-\frac{\|x-x'\|^{2}}{2\gamma^{2}}}=e^{-\tau}-e^{-\tau}\left(\frac{\|x-x'\|^{2}}{2\gamma^{2}}-\tau\right)+\frac{e^{-\xi}}{2}\left(\frac{\|x-x'\|^{2}}{2\gamma^{2}}-\tau\right)^{2},
    \]
    where $\xi$ is some real number between $\tau$ and $(2\gamma^{2})^{-1}\|x-x'\|^{2}$. Thus, we have
    \[
    \Delta^{(1)}=\E\left\{\eta^{(1)}\eta^{(1)\prime}k(X^{(1)},X^{(1)\prime})\right\}=R_{0}+R_{1}+R_{2},
    \]
    where
    \begin{align*}
        R_{0}&=e^{-\tau_{1}}\E(\eta^{(1)}\eta^{(1)\prime})=e^{-\tau_{1}}\left\{\E(\eta^{(1)})\right\}^{2},\\
        R_{1}&=-e^{-\tau_{1}}\E\left\{\eta^{(1)}\eta^{(1)\prime}\left(\frac{\|X^{(1)}-X^{(1)\prime}\|^{2}}{2\gamma^{2}}-\tau_{1}\right)\right\}\\
        &=-\frac{e^{-\tau_{1}}}{2\gamma^{2}}\E\left[\eta^{(1)}\eta^{(1)\prime}\left\{\|X^{(1)}-X^{(1)\prime}\|^{2}-2\tr(\Sigma_{1})\right\}\right],\\
        R_{2}&=\frac{1}{2}\E\left\{\eta^{(1)}\eta^{(1)\prime}e^{-\xi_{1}}\left(\frac{\|X^{(1)}-X^{(1)\prime}\|^{2}}{2\gamma^{2}}-\tau_{1}\right)^{2}\right\}\\
        &=\frac{1}{8\gamma^{4}}\E\left[\eta^{(1)}\eta^{(1)\prime}e^{-\xi_{1}}\left\{\|X^{(1)}-X^{(1)\prime}\|^{2}-2\tr(\Sigma_{1})\right\}^{2}\right],
    \end{align*}
    and $\xi_{1}$ is some random variable between $\tau_{1}$ and $(2\gamma^{2})^{-1}\|X^{(1)}-X^{(1)\prime}\|^{2}$. 

    Under Assumption \ref{assu4}, we have
    \begin{align*}
        |R_{1}|&\le\frac{e^{-\tau_{1}}}{2\gamma^{2}}\E\left\{|\eta^{(1)}||\eta^{(1)\prime}|\left|\|X^{(1)}-X^{(1)\prime}\|^{2}-2\tr(\Sigma_{1})\right|\right\}\\
        &\le\frac{e^{-\tau_{1}}}{2\gamma^{2}}\sqrt{\E(|\eta^{(1)}|^{2}|\eta^{(1)\prime}|^{2})}\sqrt{\E\left[\left\{\|X^{(1)}-\mu_{1}+\mu_{1}-X^{(1)\prime}\|^{2}-2\tr(\Sigma_{1})\right\}^{2}\right]}\\
        &=\frac{e^{-\tau_{1}}}{2\gamma^{2}}\E(|\eta^{(1)}|^{2})\sqrt{\E\left[\left\{U^{\top}\Sigma_{1}U-\tr(\Sigma_{1})+U^{\prime\top}\Sigma_{1}U'-\tr(\Sigma_{1})-2U^{\top}\Sigma_{1}U'\right\}^{2}\right]}\\
        &\overset{(i)}{=}O(\gamma^{-2})\E(|\eta^{(1)}|^{2})\sqrt{\E\left[\left\{U^{\top}\Sigma_{1}U-\tr(\Sigma_{1})\right\}^{2}\right]+\E\left\{(U^{\top}\Sigma_{1}U')^{2}\right\}}\\
        &\overset{(ii)}{=}O(\gamma^{-2})\E(|\eta^{(1)}|^{2})\sqrt{\tr(\Sigma_{1}^{2})}\\
        &=O(p^{-1/2})\E(|\eta^{(1)}|^{2}),\\
        |R_{2}|&\le\frac{1}{8\gamma^{4}}\E\left[|\eta^{(1)}||\eta^{(1)\prime}|\left\{\|X^{(1)}-X^{(1)\prime}\|^{2}-2\tr(\Sigma_{1})\right\}^{2}\right]\\
        &\le\frac{1}{8\gamma^{4}}\E(|\eta^{(1)}|^{2})\sqrt{\E\left[\left\{U^{\top}\Sigma_{1}U-\tr(\Sigma_{1})+U^{\prime\top}\Sigma_{1}U'-\tr(\Sigma_{1})-2U^{\top}\Sigma_{1}U'\right\}^{4}\right]}\\
        &\overset{(iii)}{=}O(\gamma^{-4})\E(|\eta^{(1)}|^{2})\sqrt{\E\left[\left\{U^{\top}\Sigma_{1}U-\tr(\Sigma_{1})\right\}^{4}\right]+\E\left\{(U^{\top}\Sigma_{1}U')^{4}\right\}}\\
        &\overset{(iv)}{=}O(\gamma^{-4})\E(|\eta^{(1)}|^{2})\tr(\Sigma_{1}^{2})\\
        &=O(p^{-1})\E(|\eta^{(1)}|^{2}),
    \end{align*}
    where (i) and (iii) hold by Jensen's inequality, and (ii) and (iv) use Lemmas S4 and S5 in \citet{yan2023kernel}. 

    When $\E(\eta^{(1)})=0$, we have $R_{0}=0$ and $R_{1}$ reduces to
    \begin{align*}
        R_{1}&=-\frac{e^{-\tau_{1}}}{2\gamma^{2}}\E\left[\eta^{(1)}\eta^{(1)\prime}\left\{\|X^{(1)}\|^{2}+\|X^{(1)\prime}\|^{2}-2X^{(1)\top}X^{(1)\prime}-2\tr(\Sigma_{1})\right\}\right]\\
        &=-\frac{e^{-\tau_{1}}}{\gamma^{2}}\left\{\E\left(\eta^{(1)}\|X^{(1)}\|^{2}\right)\E(\eta^{(1)\prime})-\E\left(\eta^{(1)}\eta^{(1)\prime}X^{(1)\top}X^{(1)\prime}\right)-\tr(\Sigma_{1})\E(\eta^{(1)})\E(\eta^{(1)\prime})\right\}\\
        &=\frac{e^{-\tau_{1}}}{\gamma^{2}}\left\|\cov(\eta^{(1)},X^{(1)})\right\|^{2}.
    \end{align*}
    % As $\eta^{(1)}=\varepsilon^{(1)}+m^{(1)}(X^{(1)})-m^{(2)}(X^{(1)})$, 
    % \[
    % \cov(\eta^{(1)},X^{(1)})=\cov\{m^{(1)}(X^{(1)})-m^{(2)}(X^{(1)}),X^{(1)}\}. 
    % \]
    
    A similar analysis can be carried out for $\Delta^{(2)}$. The results follow. 
\end{proof}

\begin{proof}[Proof of Theorem \ref{thm:nulla}]
    Under $H_{0}$, we have
    \[
    \etahat^{(1)}_{i}=\varepsilon^{(1)}_{i}+m^{(2)}(X^{(1)}_{i})-\mhat^{(2)}(X^{(1)}_{i}),
    \]
    and thus
    \[
    \sum_{d=1}^{p}\Deltahat^{(1)}_{d}=\frac{1}{n_{1}}\sum_{i=1}^{n_{1}}\varepsilon^{(1)}_{i}\varepsilon^{(1)}_{i+n_{1}}\sum_{d=1}^{p}k(X^{(1)}_{i}(d),X^{(1)}_{i+n_{1}}(d))+S_{1}+S_{2}+S_{3},
    \]
    where
    \begin{align*}
        S_{1}&=-\frac{1}{n_{1}}\sum_{i=1}^{n_{1}}\varepsilon^{(1)}_{i}\left\{\mhat^{(2)}(X^{(1)}_{i+n_{1}})-m^{(2)}(X^{(1)}_{i+n_{1}})\right\}\sum_{d=1}^{p}k(X^{(1)}_{i}(d),X^{(1)}_{i+n_{1}}(d)),\\
        S_{2}&=-\frac{1}{n_{1}}\sum_{i=1}^{n_{1}}\left\{\mhat^{(2)}(X^{(1)}_{i})-m^{(2)}(X^{(1)}_{i})\right\}\varepsilon^{(1)}_{i+n_{1}}\sum_{d=1}^{p}k(X^{(1)}_{i}(d),X^{(1)}_{i+n_{1}}(d)),\\
        S_{3}&=\frac{1}{n_{1}}\sum_{i=1}^{n_{1}}\left\{\mhat^{(2)}(X^{(1)}_{i})-m^{(2)}(X^{(1)}_{i})\right\}\left\{\mhat^{(2)}(X^{(1)}_{i+n_{1}})-m^{(2)}(X^{(1)}_{i+n_{1}})\right\}\sum_{d=1}^{p}k(X^{(1)}_{i}(d),X^{(1)}_{i+n_{1}}(d)).
    \end{align*}
    
    Recall $G(X^{(l)},X^{(l)\prime})=\sum_{d=1}^{p}k(X^{(l)}(d),X^{(l)\prime}(d))\ (l=1,2)$, which satisfies
    \[
    \left|G(X^{(l)},X^{(l)\prime})\right|\le\sum_{d=1}^{p}\left|k(X^{(l)}(d),X^{(l)\prime}(d))\right|=Kp,
    \]
    where we use Assumption \ref{assu1}. Then, under Assumptions \ref{assu1}, \ref{assu2p} and \ref{assu3},
    \begin{align*}
        \E(S_{1}\mid\Dsc^{(2)})&=-\E\left[\varepsilon^{(1)}\left\{\mhat^{(2)}(X^{(1)\prime})-m^{(2)}(X^{(1)\prime})\right\}G(X^{(1)},X^{(1)\prime})\mid\Dsc^{(2)}\right]\\
        &=-\E\left[\E(\varepsilon^{(1)}\mid X^{(1)})\left\{\mhat^{(2)}(X^{(1)\prime})-m^{(2)}(X^{(1)\prime})\right\}G(X^{(1)},X^{(1)\prime})\mid\Dsc^{(2)}\right]\\
        &=0,\\
        \var(S_{1}\mid\Dsc^{(2)})&=\frac{1}{n_{1}}\var\left[\varepsilon^{(1)}\left\{\mhat^{(2)}(X^{(1)\prime})-m^{(2)}(X^{(1)\prime})\right\}G(X^{(1)},X^{(1)\prime})\mid\Dsc^{(2)}\right]\\
        &\le\frac{1}{n_{1}}\E\left[\left|\varepsilon^{(1)}\left\{\mhat^{(2)}(X^{(1)\prime})-m^{(2)}(X^{(1)\prime})\right\}G(X^{(1)},X^{(1)\prime})\right|^{2}\mid\Dsc^{(2)}\right]\\
        &\le\frac{1}{n_{1}}(Kp)^{2}\E(|\varepsilon^{(1)}|^{2})\E\left\{\left|\mhat^{(2)}(X^{(1)\prime})-m^{(2)}(X^{(1)\prime})\right|^{2}\mid\Dsc^{(2)}\right\}\\
        &=o_{p}(n_{1}^{-1}n_{2}^{-1/2}p^{2}),\\
        \E(S_{2}\mid\Dsc^{(2)})&=\E(S_{1}\mid\Dsc^{(2)})=0,\\
        \var(S_{2}\mid\Dsc^{(2)})&=\var(S_{1}\mid\Dsc^{(2)})=o_{p}(n_{1}^{-1}n_{2}^{-1/2}p^{2}),\\
        \E(|S_{3}|\mid\Dsc^{(2)})&\le\E\left\{\left|\mhat^{(2)}(X^{(1)})-m^{(2)}(X^{(1)})\right|\left|\mhat^{(2)}(X^{(1)\prime})-m^{(2)}(X^{(1)\prime})\right||G(X^{(1)},X^{(1)\prime})|\mid\Dsc^{(2)}\right\}\\
        &\le Kp\left[\E\left\{\left|\mhat^{(2)}(X^{(1)})-m^{(2)}(X^{(1)})\right|\mid\Dsc^{(2)}\right\}\right]^{2}\\
        &\le Kp\E\left\{\left|\mhat^{(2)}(X^{(1)})-m^{(2)}(X^{(1)})\right|^{2}\mid\Dsc^{(2)}\right\}\\
        &=o_{p}(n_{2}^{-1/2}p).
    \end{align*}
    Using Lemma 6.1 in \citet{chernozhukov2018double}, we have
    \begin{align*}
        S_{1}&=o_{p}(n_{1}^{-1/2}n_{2}^{-1/4}p),\\
        S_{2}&=o_{p}(n_{1}^{-1/2}n_{2}^{-1/4}p),\\
        S_{3}&=o_{p}(n_{2}^{-1/2}p).
    \end{align*}
    
    A similar analysis can be carried out for $\sum_{d=1}^{p}\Deltahat^{(2)}_{d}$. Consequently, under $H_{0}$, 
    \[
    T_{a}=\sum_{l=1}^{2}\frac{1}{n_{l}}\sum_{i=1}^{n_{l}}\varepsilon^{(l)}_{i}\varepsilon^{(l)}_{i+n_{l}}\sum_{d=1}^{p}k(X^{(l)}_{i}(d),X^{(l)}_{i+n_{l}}(d))+o_{p}(n_{1}^{-1/2}p+n_{2}^{-1/2}p).
    \]
    Under Assumptions \ref{assu1} and \ref{assu2p}, we have
    \[
    \xi_{l}^{2}\asymp p^{2}\quad(l=1,2),
    \]
    and thus
    \[
    \left(\frac{\xi_{1}^{2}}{n_{1}}+\frac{\xi_{2}^{2}}{n_{2}}\right)^{-1/2}T_{a}=\left(\frac{\xi_{1}^{2}}{n_{1}}+\frac{\xi_{2}^{2}}{n_{2}}\right)^{-1/2}\sum_{l=1}^{2}\frac{1}{n_{l}}\sum_{i=1}^{n_{l}}\varepsilon^{(l)}_{i}\varepsilon^{(l)}_{i+n_{l}}\sum_{d=1}^{p}k(X^{(l)}_{i}(d),X^{(l)}_{i+n_{l}}(d))+o_{p}(1). 
    \]
    Applying the Lyapunov CLT, 
    \[
    \left(\frac{\xi_{1}^{2}}{n_{1}}+\frac{\xi_{2}^{2}}{n_{2}}\right)^{-1/2}T_{a}\convd\Nsc(0,1),
    \]
    where we use Assumptions \ref{assu1} and \ref{assu2p} to verify the Lyapunov condition:
    \[
    \lim_{n_{1},n_{2}\rightarrow\infty}\frac{\sum_{l=1}^{2}n_{l}^{-3}\E\left\{\left|\varepsilon^{(l)}\varepsilon^{(l)\prime}G(X^{(l)},X^{(l)\prime})\right|^{4}\right\}}{\left[\sum_{l=1}^{2}n_{l}^{-1}\E\left\{\left|\varepsilon^{(l)}\varepsilon^{(l)\prime}G(X^{(l)},X^{(l)\prime})\right|^{2}\right\}\right]^{2}}\le\lim_{n_{1},n_{2}\rightarrow\infty}\frac{C_{1}^{2}K^{4}\sum_{l=1}^{2}n_{l}^{-3}}{c_{4}^{2}\sum_{l=1}^{2}n_{l}^{-2}}=0.
    \]
\end{proof}

\begin{proof}[Proof of Theorem \ref{thm:ratioa}]
    Under $H_{0}$, we have
    \[
    \xihat_{1}^{2}=\frac{1}{n_{1}}\sum_{i=1}^{n_{1}}\left\{\varepsilon^{(1)}_{i}\varepsilon^{(1)}_{i+n_{1}}\sum_{d=1}^{p}k(X^{(1)}_{i}(d),X^{(1)}_{i+n_{1}}(d))\right\}^{2}+Q_{1}+Q_{2},
    \]
    where
    \begin{align*}
        Q_{1}&=\frac{1}{n_{1}}\sum_{i=1}^{n_{1}}\Big[-\varepsilon^{(1)}_{i}\left\{\mhat^{(2)}(X^{(1)}_{i+n_{1}})-m^{(2)}(X^{(1)}_{i+n_{1}})\right\}-\left\{\mhat^{(2)}(X^{(1)}_{i})-m^{(2)}(X^{(1)}_{i})\right\}\varepsilon^{(1)}_{i+n_{1}}\\
        &\quad+\left\{\mhat^{(2)}(X^{(1)}_{i})-m^{(2)}(X^{(1)}_{i})\right\}\left\{\mhat^{(2)}(X^{(1)}_{i+n_{1}})-m^{(2)}(X^{(1)}_{i+n_{1}})\right\}\Big]^{2}\left\{\sum_{d=1}^{p}k(X^{(1)}_{i}(d),X^{(1)}_{i+n_{1}}(d))\right\}^{2},\\
        Q_{2}&=\frac{2}{n_{1}}\sum_{i=1}^{n_{1}}\varepsilon^{(1)}_{i}\varepsilon^{(1)}_{i+n_{1}}\Big[-\varepsilon^{(1)}_{i}\left\{\mhat^{(2)}(X^{(1)}_{i+n_{1}})-m^{(2)}(X^{(1)}_{i+n_{1}})\right\}-\left\{\mhat^{(2)}(X^{(1)}_{i})-m^{(2)}(X^{(1)}_{i})\right\}\varepsilon^{(1)}_{i+n_{1}}\\
        &\quad+\left\{\mhat^{(2)}(X^{(1)}_{i})-m^{(2)}(X^{(1)}_{i})\right\}\left\{\mhat^{(2)}(X^{(1)}_{i+n_{1}})-m^{(2)}(X^{(1)}_{i+n_{1}})\right\}\Big]\left\{\sum_{d=1}^{p}k(X^{(1)}_{i}(d),X^{(1)}_{i+n_{1}}(d))\right\}^{2}. 
    \end{align*}
    By similar calculations as in the proof of Theorem \ref{thm:nulla}, one can derive
    \begin{align*}
        \E(|Q_{1}|\mid\Dsc^{(2)})&=o_{p}(n_{2}^{-1/2}p^{2}),\\
        \E(|Q_{2}|\mid\Dsc^{(2)})&=o_{p}(n_{2}^{-1/4}p^{2}).
    \end{align*}
    Using Lemma 6.1 in \citet{chernozhukov2018double}, 
    \begin{align*}
        Q_{1}&=o_{p}(n_{2}^{-1/2}p^{2}),\\
        Q_{2}&=o_{p}(n_{2}^{-1/4}p^{2}).
    \end{align*}
    As 
    \[
    \frac{1}{n_{1}}\sum_{i=1}^{n_{1}}\left\{\varepsilon^{(1)}_{i}\varepsilon^{(1)}_{i+n_{1}}\sum_{d=1}^{p}k(X^{(1)}_{i}(d),X^{(1)}_{i+n_{1}}(d))\right\}^{2}=\xi_{1}^{2}+O_{p}(n_{1}^{-1/2}p^{2}),
    \]
    and $\xi_{1}^{2}\asymp p^{2}$ under Assumptions \ref{assu1} and \ref{assu2p}, we have
    \[
    \frac{\xihat_{1}^{2}-\xi_{1}^{2}}{\xi_{1}^{2}}\convp 0.
    \]
    
    A similar analysis can be carried out for $\xihat_{2}^{2}$. The results follow.
\end{proof}

\begin{proof}[Proof of Theorem \ref{thm:alta}]
    Under $H_{a}$, we have
    \[
    \etahat^{(1)}_{i}=\eta^{(1)}_{i}+m^{(2)}(X^{(1)}_{i})-\mhat^{(2)}(X^{(1)}_{i}),
    \]
    and thus
    \[
    \sum_{d=1}^{p}\Deltahat^{(1)}_{d}=\frac{1}{n_{1}}\sum_{i=1}^{n_{1}}\eta^{(1)}_{i}\eta^{(1)}_{i+n_{1}}\sum_{d=1}^{p}k(X^{(1)}_{i}(d),X^{(1)}_{i+n_{1}}(d))+S_{1}+S_{2}+S_{3}+S_{4}+S_{5},
    \]
    where $S_{1},S_{2},S_{3}$ are defined as in the proof of Theorem \ref{thm:nulla}, and
    \begin{align*}
        S_{4}&=-\frac{1}{n_{1}}\sum_{i=1}^{n_{1}}\left\{m^{(1)}(X^{(1)}_{i})-m^{(2)}(X^{(1)}_{i})\right\}\left\{\mhat^{(2)}(X^{(1)}_{i+n_{1}})-m^{(2)}(X^{(1)}_{i+n_{1}})\right\}\sum_{d=1}^{p}k(X^{(1)}_{i}(d),X^{(1)}_{i+n_{1}}(d)),\\
        S_{5}&=-\frac{1}{n_{1}}\sum_{i=1}^{n_{1}}\left\{\mhat^{(2)}(X^{(1)}_{i})-m^{(2)}(X^{(1)}_{i})\right\}\left\{m^{(1)}(X^{(1)}_{i+n_{1}})-m^{(2)}(X^{(1)}_{i+n_{1}})\right\}\sum_{d=1}^{p}k(X^{(1)}_{i}(d),X^{(1)}_{i+n_{1}}(d)).
    \end{align*}
    Following the proof of Theorem \ref{thm:nulla}, we still have
    \begin{align*}
        S_{1}&=o_{p}(n_{1}^{-1/2}n_{2}^{-1/4}p),\\
        S_{2}&=o_{p}(n_{1}^{-1/2}n_{2}^{-1/4}p),\\
        S_{3}&=o_{p}(n_{2}^{-1/2}p).
    \end{align*}
    Moreover, under Assumptions \ref{assu1} and \ref{assu3}, 
    \begin{align*}
        \E(|S_{4}|\mid\Dsc^{(2)})&\le\E\left\{\left|m^{(1)}(X^{(1)})-m^{(2)}(X^{(1)})\right|\left|\mhat^{(2)}(X^{(1)\prime})-m^{(2)}(X^{(1)\prime})\right||G(X^{(1)},X^{(1)\prime})|\mid\Dsc^{(2)}\right\}\\
        &\le Kp\sqrt{\E\left\{\left|m^{(1)}(X^{(1)})-m^{(2)}(X^{(1)})\right|^{2}\right\}}\sqrt{\E\left\{\left|\mhat^{(2)}(X^{(1)\prime})-m^{(2)}(X^{(1)\prime})\right|^{2}\mid\Dsc^{(2)}\right\}}\\
        &=\|m^{(1)}-m^{(2)}\|_{L^{2}(P_{X}^{(1)})}o_{p}(n_{2}^{-1/4}p).
    \end{align*}
    Using Lemma 6.1 in \citet{chernozhukov2018double}, we have
    \[
    S_{4}=\|m^{(1)}-m^{(2)}\|_{L^{2}(P_{X}^{(1)})}o_{p}(n_{2}^{-1/4}p).
    \]
    Likewise, one can derive $S_{5}=\|m^{(1)}-m^{(2)}\|_{L^{2}(P_{X}^{(1)})}o_{p}(n_{2}^{-1/4}p)$. 
    
    A similar analysis can be carried out for $\sum_{d=1}^{p}\Deltahat^{(2)}_{d}$. Consequently, under $H_{a}$,
    \begin{align*}
        T_{a}&=\sum_{l=1}^{2}\frac{1}{n_{l}}\sum_{i=1}^{n_{l}}\eta^{(l)}_{i}\eta^{(l)}_{i+n_{l}}\sum_{d=1}^{p}k(X^{(1)}_{i}(d),X^{(1)}_{i+n_{1}}(d))+o_{p}(n_{1}^{-1/2}p+n_{2}^{-1/2}p)\\
        &\quad+\|m^{(1)}-m^{(2)}\|_{L^{2}(P_{X}^{(2)})}o_{p}(n_{1}^{-1/4}p)+\|m^{(1)}-m^{(2)}\|_{L^{2}(P_{X}^{(1)})}o_{p}(n_{2}^{-1/4}p)\\
        &=\sum_{d=1}^{p}\Delta^{(1)}_{d}+\sum_{d=1}^{p}\Delta^{(2)}_{d}+O_{p}(n_{1}^{-1/2}p+n_{2}^{-1/2}p)\\
        &\quad+\|m^{(1)}-m^{(2)}\|_{L^{2}(P_{X}^{(2)})}o_{p}(n_{1}^{-1/4}p)+\|m^{(1)}-m^{(2)}\|_{L^{2}(P_{X}^{(1)})}o_{p}(n_{2}^{-1/4}p).
    \end{align*}
    As $\xihat_{l}^{2}=\E[\{\eta^{(l)}\eta^{(l)\prime}G(X^{(l)},X^{(l)\prime})\}^{2}]\{1+o_{p}(1)\}$ for $l=1,2$, the results follow. 
\end{proof}

\end{appendices}

\newpage
\bibliography{main}

\end{document}